\newtheorem{Prop}{Proposition}
\newtheorem{Theo}{Theorem}
\newtheorem{Algo}{Algorithm}
\title{Multiplex Structures: Patterns of Complexity in Real-World Networks}
\author{Bo Yang$^1$, Jiming Liu$^2$}
\begin{document}

\maketitle

\begin{affiliations}
\item Jilin University, Changchun city, China. Email: ybo@jlu.edu.cn
 \item Hong Kong Baptist University, Hong Kong. Email: jiming@comp.hkbu.edu.hk
\end{affiliations}

%This should contain two (100-word or shorter) summaries: a concise
%paragraph to the editor indicating scientific grounds why the paper
%should be considered for a topical, interdisciplinary journal rather
%than for a single-discipline or archival journal; and a separate
%summary of the appeal to a non-scientific audience. The cover letter
%should state clearly what is included as the submission, including
%number of words of text, number of figures and individual parts
%(e.g. 4 figures, comprising 16 panels in total), supporting
%manuscripts and any Supplementary Information (specifying number of
%items and format). A rough estimate of the desired final size of
%figures in terms of number of pages is also needed; and full current
%postal and e-mail address, telephone and fax numbers. Nature
%communicates with authors by e-mail; authors must specify if they
%wish to exclude a method of communication.

%Articles have a summary, separate from the main text, of up to 150
%words, which does not have references, and does not contain numbers,
%abbreviations, acronyms or measurements unless essential. It is
%aimed at readers outside the discipline. This summary contains a
%paragraph (2-3 sentences) of basic-level introduction to the field;
%a brief account of the background and rationale of the work; a
%statement of the main conclusions (introduced by the phrase 'Here we
%show' or its equivalent); and finally, 2-3 sentences putting the
%main findings into general context so it is clear how the results
%described in the paper have moved the field forwards.

\begin{abstract}
Complex network theory aims to model and analyze complex systems
that consist of multiple and interdependent components. Among all
studies on complex networks, topological structure analysis is of
the most fundamental importance, as it represents a natural route to
understand the dynamics, as well as to synthesize or optimize the
functions, of networks. A broad spectrum of network structural
patterns have been respectively reported in the past decade, such as
communities, multipartites, hubs, authorities, outliers, bow ties,
and others. Here, we show that most individual real-world networks
demonstrate multiplex structures. That is, a multitude of known or
even unknown (hidden) patterns can simultaneously situate in the
same network, and moreover they may be overlapped and nested with
each other to collaboratively form a heterogeneous, nested or
hierarchical organization, in which different connective phenomena
can be observed at different granular levels. In addition, we show
that the multiplex structures hidden in exploratory networks can be
well defined as well as effectively recognized within an unified
framework consisting of a set of proposed concepts, models, and
algorithms. Our findings provide a strong evidence that most
real-world complex systems are driven by a combination of
heterogeneous mechanisms that may collaboratively shape their
ubiquitous multiplex structures as we observe currently. This work
also contributes a mathematical tool for analyzing different sources
of networks from a new perspective of unveiling multiplex
structures, which will be beneficial to multiple disciplines
including sociology, economics and computer science.
\end{abstract}

\section{Introduction}
Complex network analysis provides a novel approach to examining how
networked systems in nature are originated and evolving according to
what basic principles, and moreover armed with such discovered
principles, constructing efficient, robust as well as flexible
man-made networked systems under different constraints. Among all
studies about complex networks, structure analysis is the most
fundamental, and the ability to discover and visualize the
underlying structure of a real-world network in question will be
greatly helpful for both topological and dynamic analysis applied to
it \cite{report2006}. So far, scientists have uncovered a multitude
of structural patterns ubiquitously existing in social, biological,
ecological or technological networks; they may be microscopic such
as motifs \cite{Milo2002}, mesoscopic such as
modularities\cite{Newman2002} or macroscopic such as small world
\cite{Watts1998} and scale-free phenomena \cite{Barabasi1999}. These
structural patterns observed at different granular levels may
collectively unveil the secrets hidden in complex networked systems.
All these works have greatly triggered the common interesting as
well as boosted the progress of exploring complex networks in both
scientific and engineering domains. However, the topological
structure analysis of complex networks, even restricted to the
mesoscopic level, remains one of the major challenges in network
theory mainly because most of the real-world networks are usually
resulted from a combination of heterogeneous mechanisms which may
collaboratively shape their non-trivial structures. More
specifically one can raise the following issues.

Above all, beyond modularity the most extensively studied at the
mesoscopic level \cite{Newman2002,report2010}, a wide spectrum of
structural patterns have been reported in the literature including
bipartites or more generally multipartites
\cite{bipartite2003,bipartite2004,bipartite2009}, hubs , authorities
and outliers \cite{K1999,Albert2000,Sporns2007}, bow ties
\cite{Bowtie1999,bowtie2000,bowtie2003} or others. Moreover, these
miscellaneous patterns may simultaneously coexist in the same
networks, or  they may even overlap with each other such as the
overlaps between communities \cite{Palla2005}. Fig. \ref{lesmis}
shows an illustrative example, in which a social network encoding
the co-appearances of 77 characters in the novel "Les Miserables" is
visualized in terms of both network and matrix representations. One
would observe two hubs and a number of outliers coexisting with four
well-formed communities. The two hubs, corresponding to
\emph{Valjean} and \emph{Javert}, are the main characters of the
novel, and their links are across all other clusters by connecting
about 48\% of the overall characters. It indicates that the two
roles interacting with different characters in different chapters
are the two main clues going through the whole story. Four detected
communities can be seen as four relatively independent social
cliques. As an example, we go into details of group 4 that is almost
separated from the rest of the network. Interestingly, this small
social clique consists of 4 parisian students Tholomyes, Listolier,
Fameuil and Blacheville with their respective lovers Fantine,
Dahlia, Zephine and Favourite. Group 5 consists of 39 outliers,
which connect to either two hubs or one of four communities by only
a few links. Correspondingly, these outliers are the supporting
roles of this novel. Besides this example, more complex structural
patterns in real-world networks will be demonstrated and discussed
in the rest of this article.

Secondly, such multiple structural patterns may be nested. That is,
real-world networks can contain hierarchical organizations with
heterogeneous patterns at different levels. In the literature,
hierarchical structures are usually studied in a homogeneous way, in
which patterns observed in each layer of hierarchies show homophily
in terms of either fractal property \cite{hierarchyscience2004} or
modularity in more general cases \cite{hierarchy2006,hierarchy2007}.
A recent study reveals that the patterns demonstrated in each layer
of a dendrogram can be assortative (or modular) or disassortative
(or bipartite) \cite{hierarchy2008}. The ability to observe
heterogeneous patterns at different levels brings new clues for
understanding the dynamics of real-world complex networks.

Due to the above two reasons, for an exploratory network about which
one often knows little, it is very hard to know what specific
structural patterns can be expected and then be obtained by what
specific tools. Biased results will be caused if an inappropriate
tool is chosen; even if we know something about it beforehand, it is
still difficult for a tool exclusively designed for exploring very
specific patterns, say modularity, to satisfactorily uncover a
multiple of coexisting patterns possibly overlapped or even nested
with each other (we call them \emph{multiplex structures} in this
work) from networks. To the best our knowledge, there have been no
studies in the literature being able to address both of the above
issues. On the other hand, human beings have the capability of
simultaneously discovering multiplex and significant structural
patterns for various objectives. It has been believed that this kind
of capability is an important form of human cognitive and
intelligent functions \cite{multiplexform2008}.

Back to the matrix as shown in the Fig. \ref{lesmis}, one would
observe an intuitive phenomenon: Any significant pattern contained
in the underlying structures of the network can be statistically
highlighted by a group of homogenous individuals with identical or
quite similar connectivity profiles. For instance, individuals from
the same communities prefer to intensively interact with each other
but rarely interact with the rest; hubs would prefer to connect many
individuals from different parts of the whole network, whereas all
outliers tend to seldom play with others by emitting only a few
connections. Based on this naive observation, if one can group the
majority of individuals into reasonable clusters according to their
connectivity profiles, the coexisting structural patterns can be
unveiled by further inferring the linkage among clusters. In this
way, the first issue listed above can be promisingly solved.

The idea of grouping nodes into \emph{equivalent} clusters in terms
of their connection patterns is quite similar with the philosophy of
the \emph{blockmodeling} first proposed by Lorrain and White
\cite{block1971}, in which nodes with \emph{structural equivalence}
(defined in terms of local neighborhood configurations) or more
generally \emph{regular equivalence} \cite{block1983a} (defined in
terms of globally physical connections to all others) or more softly
\emph{stochastic equivalence} \cite{block1981,block1983b} (defined
in terms of linking probabilities between groups) will be grouped
into the same blocks.

Based on the same idea, a very related study has been proposed
recently by Newman and Leicht, which first (to our knowledge) shows
the motivation to detect unpredefined structural patterns from
exploratory networks \cite{pans2007}. From the perspective of
machine leaning, their method can be seen as a version of naive
Bayes algorithm applied to networks, whose objective is to group
nodes with similar connection features into a predefined number of
clusters. Although their work only shows the ability to determine
whether an exploratory network is assortative or disassortative by
manually analyzing the obtained clusters, it has provided one good
proof supporting that the idea of grouping nodes into equivalent
clusters can be an initial step of the whole process aiming to
unveil multiplex structures from networks.

In this work, we will propose a novel model by introducing the
concept of granularity into stochastic connection profiles in order
to properly model multiplex structures, and then show how the task
of recognizing multiplex structures can be reduced to a simplified
version of the isomorphism subgraph matching problem. To test our
ideas and strategies proposed here, different sources of networks
have been analyzed. It is encouraging that our methods show a good
performance, capable of uncovering multiplex structures from the
tested networks in a fully automatic way.

\section{The Model}\label{sec:2}
\subsection{Granular couplings}
We will model the connection profiles of nodes in terms of
probabilities instead of physical connectivity. In this way, it is
expected not only to find out multiplex structures but also to
provide an explicit probabilistic interpretation for these findings
within the Bayesian framework. We term such probabilities as
\emph{couplings} in that they are not just the mathematical measures
subjectively defined for modeling or computing, but they do exist in
many real-world systems, encoding different physical meanings such
as social preferences in societies, predation habits in ecosystems,
co-expression regularities in gene networks or co-occurrence
likelihood of words in languages, which will be valuable to predict
their situated systems.

%\begin{figure}
%  \centering
%\centerline{\hbox{\psfig{figure=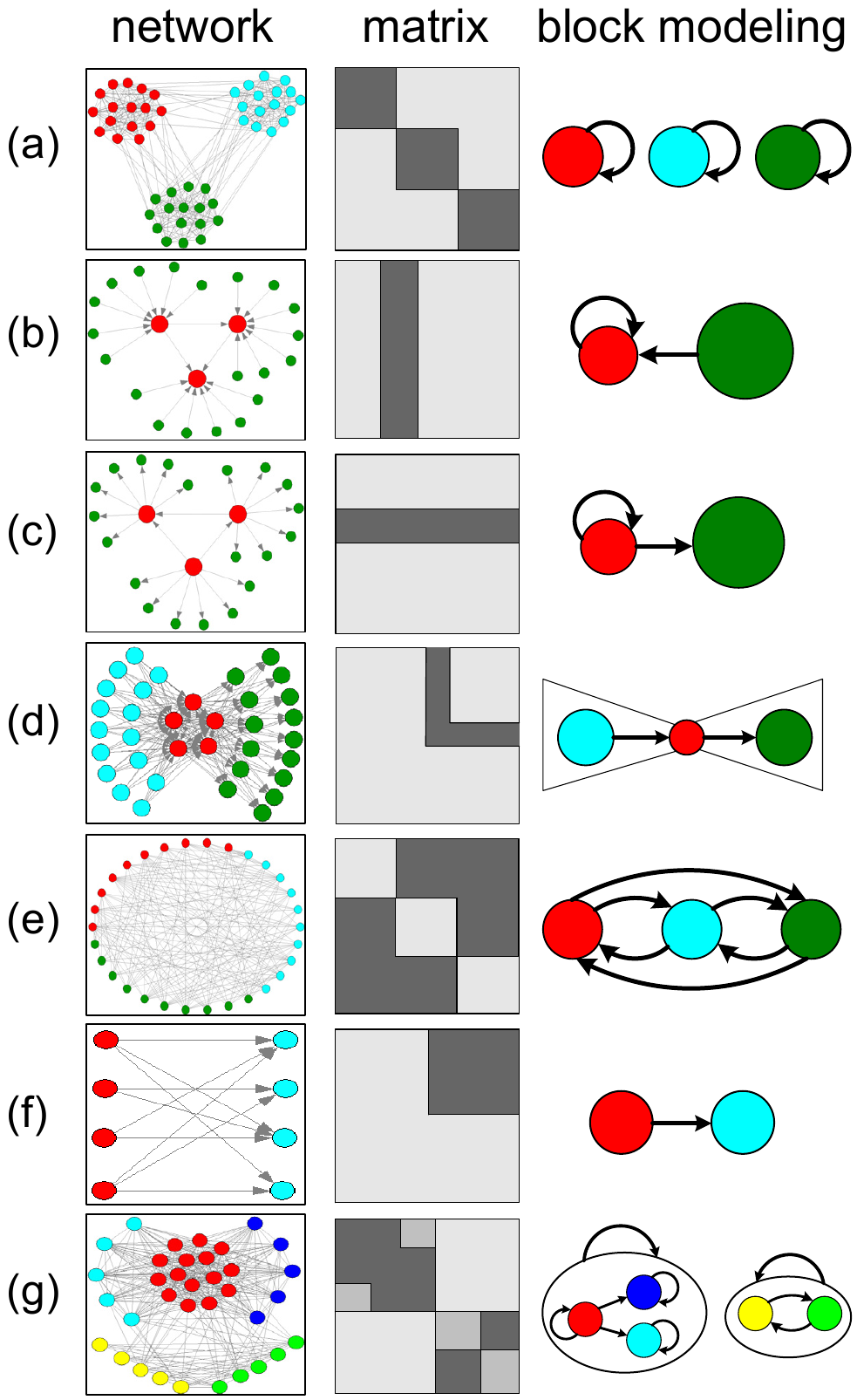,scale=0.7}}}
%  \caption{The schematic illustrations of multiplex structures.
%  (a)-(g) shows seven structural
%patterns frequently observed in real-world networks, which are
%represented by networks, matrices and blocking models, respectively.
%In the matrices, shades represent the densities of links. In the
%blocking models, circles denote blocks and solid arrow lines denote
%block feedforward-couplings. From (a)-(g), structural patterns are
%communities, authorities and outliers, hubs and outliers, a bow tie,
%a multipartite, a bipartite and a hierarchical organization,
%respectively. (g) shows a two-layer hierarchy; two overlapped
%communities (also can be seen as a hub with two communities) and one
%bipartite are in the first layer; two communities are in the second
%layer. \label{blocking}}
%\end{figure}

Here the notation of granularity should be interpreted in terms of
the resolution and precision. On one hand, in our model we define
two kinds of couplings with different resolutions, i.e., \emph{node
couplings} and \emph{block couplings}. Formally, we define
\emph{node feedforward-coupling} matrix $P_{n\times n}$ and
\emph{node feedback-coupling} matrix $Q_{n\times n}$, where $p_{ij}$
and $q_{ij}$ respectively denote the probabilities that node $i$
expects to couple with or to be coupled by node $j$. In the cases of
indirected networks we have $P=Q$ (see SI for proofs). We assume
nodes will independently couple with others regulated by such
couplings. Nodes with similar feedforward- as well as feedback-
coupling distributions will be clustered into the same blocks. In
terms of matrices, homogeneous feedforward- and feedback-couplings
guarantee homogeneous row and column connection profiles,
respectively. Correspondingly, given the block number $K$, we define
\emph{block feedforward-coupling} matrix $\Phi_{K\times K}$ and
\emph{block feedback-coupling matrix} $\Psi_{K\times K}$, where
$\phi_{pq}$ and $\psi_{pq}$ respectively denote the probabilities
that block $C_p$ expects to couple with or to be coupled by block
$C_q$. We will show later that block couplings can be inferred from
node couplings and vice versa. Node couplings with a fine
granularity are used to model networks in order to capture their
local information as much as possible; while block couplings with a
coarser granularity are used to define and recognize global
structural patterns by intentionally neglecting trivial details. On
the other hand, in the nested patterns, node couplings and block
couplings in different hierarchies will have different precisions in
order to properly abstract and construct hierarchical organizations.
In our model, the couplings on higher layers are the approximations
of the related ones on the lower layers. Therefore, as the layers
moving from the bottom (corresponding to the original networks) to
the top (corresponding to the finally reduced networks) of the
nested organizations, the precision of node or block couplings will
gradually degenerate.

%\begin{figure}
%  \centering
%\centerline{\hbox{\psfig{figure=granular_couplings.pdf,scale=0.23}}}
%  \caption{Small (big) circles denote nodes (blocks); thin (thick) solid (dashed) arrow lines denote the
%  node (block) feedforward-(feedback-) couplings; the orientations of arrow lines in terms of vectors
%  encode coupling distributions. In each layer, the
%  nodes with homogeneous orientations will be
%  clustered into the same blocks, and the block couplings will be
%  then computed by summing the coupling distributions of the nodes belonging to them. In the higher
%layer, node couplings will be recomputed by incorporating a wider
%range of nodes into the same blocks. Correspondingly, the block
%couplings in a higher layer will also be readjusted. As illustrated,
%the orientations of the nodes and the block in layer $h$+1 are the
%sums of the ones in the layer $h$.\label{couplings}}
%\end{figure}

\subsection{Defining multiplex structures}
The main steps of our strategies for detecting multiplex structures
from networks can be stated as follows: 1) simultaneously estimating
all kinds of couplings mentioned above and clustering all nodes into
nested blocks with a proposed granular blocking algorithm; and 2) in
each layer of the nested blocks, recognizing structural patterns by
matching predefined isomorphism subgraphs from a reduced blocking
model in which trivial couplings are neglected, as illustrated in
Fig. \ref{lesmis}d. Multiplex structures can be defined in terms of
blocks and their couplings. Fig. \ref{blocking} shows a schematic
illustration by means of some conceptual networks. By referring to
them, we give following definitions.

A \emph{community} is a self-coupled block. An \emph{authority} is a
self-coupled block which is coupled by a number of other blocks. A
\emph{hub} is a self-coupled block which couples with a number of
other blocks. An \emph{outlier} is a block without self-coupling
which is coupled by a hub or couples with an authority. \emph{A
bow-tie} is a subgraph consisting of a block $b$ and two sets of
blocks $S_l(b)$ and $S_r(b)$, which satisfy with: 1) $b$ is coupled
by and couples with the blocks of $S_l(p)$ and $S_r(p)$,
respectively; 2) the intersection of $S_l(p)$ and $S_r(p)$ is empty
or $\{b\}$; and 3) there are no couplings between $S_l(p)$ and
$S_r(p)$; A \emph{ multipartite} is a subgraph consisting of a set
of blocks without self-couplings which reciprocally couple with each
other. As a special case of multipartite, a \emph{bipartite} is a
subgraph consisting of two blocks without self-loop couplings which
unilaterally or bilaterally coupled with each other. A
\emph{hierarchical organization} is a set of nested blocks, in which
block subgraphs in lower layers are directly or indirectly nested in
the bigger blocks on higher layers.

The above definitions imply that there may exist overlaps between
different patterns in the sense that the same blocks can be
simultaneously involved in a multitude of subgraphs. For example, a
block which is determined as a community can be also a hub, a
authority or the core of one bow tie. Moreover, beyond the
predefined patterns, users are allowed to define novel even more
complex patterns by designing new subgraphs of blocks, which can be
identified by matching their isomorphism counterparts from blocking
models.

\subsection{Granular blocking model}

Let $N=(V,E)$ be a directed and binary network, where $V(N)$ denotes
the set of nodes and $E(N)$ denotes the set of directed links. In
the case of undirected network, we suppose there are two direct
links between each pair of nodes. Let $A_{n\times n}$ be the
adjacency matrix of $N$, where $n$ is the number of nodes.

Suppose all nodes of $N$ are divided into $L(1\leq L \leq n)$
blocks, denotes by $B_{n\times L}$, where $b_{il}=1$ if node $i$ is
in the block $l$, otherwise $b_{il}=0$. When each block is
considered to be inseparable, the \emph{granularity} of network $N$
can be measured by the average size of blocks $g=n/L$. As $g$
increasing from 1 to $n$, the granularity of $N$ degenerates from
the finest to the coarsest. Let $B_g$ denote the blocking model with
a granularity $g$, and we expect to cluster all its blocks into a
reasonable number of clusters so that the nodes of blocks within the
same cluster will demonstrate homogeneous coupling distributions.
Let matrix $Z_{L\times K}(1\leq K\leq L)$ denote such clusters,
where $K$ is the cluster number and $z_{lk}=1$ if block $l$ is
labeled by cluster $k$, otherwise $z_{lk}=0$. Since the coupling
distributions of nodes within the same clusters are expected to be
homogeneous, one can characterize such distributions for each
cluster instead of for each node. Given $Z$, define $\Theta_{K\times
n}$, where $\theta_{kj}$ denotes the probability that any node out
of cluster $k$ expects to couple with node $v_j$; define
$\Delta_{K\times n}$, where $\delta_{kj}$ denotes the probability
that any node out of cluster $k$ expects to be coupled by node
$v_j$; define $\Omega=(\omega_1,\cdots,\omega_K)^T$, where
$\omega_k$ denotes the prior probability that a randomly selected
node will belong to the cluster $C_k$. It is easy to show
$P=B_gZ\Theta$ and $Q=B_gZ\Delta$ (see SI).

Let $X=(K,Z,\Theta,\Delta,\Omega)$ be a model with respect to $N$
and $B_g$. We expect to select an optimal $X$ from its hypothesis
space to properly fit as well as to precisely predict the behaviors
of $N$ under $B_g$ in terms of node couplings characterized by it.
According to the MAP principle (maximum a posteriori), the optimal
$X$ for a given network $N$ under $B_g$ will be the one with the
maximum posterior probability. Moreover, we have: $P(X|N,B_g)\propto
P(N|X,B_g)P(X|B_g)$, where $P(X|N,B_g)$, $P(N|X,B_g)$ and $P(X|B_g)$
denote the posteriori of $X$ given $N$ and $B_g$, the likelihood of
$N$ given $X$ and $B_g$ and the priori of $X$ given $B_g$,
respectively.

\section{Learning Methods}
\subsection{Likelihood maximization} We first consider the simplest
case by assuming all the prioris of $X$ given $B_g$ are equal. In
this case, we have: $P(X|N,B_g)\propto P(N|X,B_g)$. Let
$L(N|X,B_g)=\ln P(N|X,B_g)$, we have (see SI):

%\begin{Prop}\label{prop:1}
\begin{equation}
L(N|X,B_g)=\sum_{l=1}^L\sum_{b_{il}\neq 0}\ln \sum_{k=1}^K
\Pi_{j=1}^n
f(\theta_{kj},a_{ij})f(\delta_{kj},a_{ji})\omega_k\label{likelihood}
\end{equation}
%\end{Prop}

\noindent where $f(x,y)=x^y(1-x)^{1-y}$.

Let $L(N,Z|X,B_g)$ be the log-likelihood of the joint distribution
of $N$ and $Z$ given $X$ and $B_g$, we have (see SI):

\begin{eqnarray}
L(N,Z|X,B_g)= \sum_{l=1}^L\sum_{b_{il}\neq 0}\sum_{k=1}^K
z_{lk}(\sum_{j=1}^n (\ln f(\theta_{kj},a_{ij})+ \ln
f(\delta_{kj},a_{ji}))+\ln\omega_k)\label{joint}
\end{eqnarray}

Considering the expectation of $L(N,Z|X,B_g)$ on $Z$, we have:

\begin{eqnarray}
E[L(N,Z|X,B_g)]= \sum_{l=1}^L\sum_{b_{il}\neq 0}\sum_{k=1}^K
\gamma_{lk}(\sum_{j=1}^n (\ln f(\theta_{kj},a_{ij})+ \ln
f(\delta_{kj},a_{ji}))+\ln\omega_k)\label{Ejoint}
\end{eqnarray}

\noindent where $E[z_{lk}]=\gamma_{lk}=P(y=k|b=l,X,B_g)$, i.e., the
probability that block $l$ will be labeled as cluster $k$ given $X$
and $B_g$.

Let $J=E[L(N,Z|X,B_g)]+\lambda(\sum_{k=1}^K \omega_k-1)$ be a
Lagrangian function constructed for maximizing $E[L(N,Z|X,B_g)]$
with a constraint $\sum_{k=1}^K \omega_k=1$, we have:
\begin{equation}
\left\{\begin{array}{ll}
\frac{\partial J}{\partial \theta_{kj}}=0 \\
\frac{\partial J}{\partial \delta_{kj}}=0 \\
\frac{\partial J}{\partial \omega_k}=0 \\
\frac{\partial J}{\partial \lambda}=0
\end{array} \right. \Rightarrow \left\{ \begin{array}{l}
\theta_{kj}=\frac{\sum_{l=1}^L\sum_{b_{il}\neq 0}
a_{ij}\gamma_{lk}}{\sum_{l=1}^L\sum_{b_{il}\neq
0}\gamma_{lk}}\\
\delta_{kj}=\frac{\sum_{l=1}^L\sum_{b_{il}\neq
0} a_{ji}\gamma_{lk}}{\sum_{l=1}^L\sum_{b_{il}\neq 0}\gamma_{lk}} \\
\omega_k=\frac{\sum_{l=1}^L\sum_{b_{il}\neq 0}\gamma_{bk}}{n}
\end{array} \right.\label{eq:theta}
\end{equation}

According to the Bayesian theorem,we have (see SI):
\begin{equation}
\gamma_{lk}=\frac{1}{\sum_{i=1}^n b_{il}}\sum_{b_{il}\neq
0}\frac{\Pi_{j=1}^n f(\theta_{kj},a_{ij})f(\delta_{kj},a_{ji})
\omega_k}{\sum_{k=1}^K \Pi_{j=1}^n
f(\theta_{kj},a_{ij})f(\delta_{kj},a_{ji}) \omega_k}
\label{eq:gamma}
\end{equation}

Using the similar treatment as proposed by Dempster and Laird
\cite{Dempster1977}, we can prove that a local optimum of
Eq.\ref{likelihood} will be guaranteed by recursively calculating
Eqs.\ref {eq:theta} and \ref{eq:gamma} (see SI). The time complexity
of this iterative computing process is $O(In^2K)$, where $I$ is the
iterations required for convergence, which is usually quite small.
An approximate but much faster version with a time $O(ILnK)$ is
given in the SI.

%\begin{figure}
%  \centering
%\centerline{\hbox{\psfig{figure=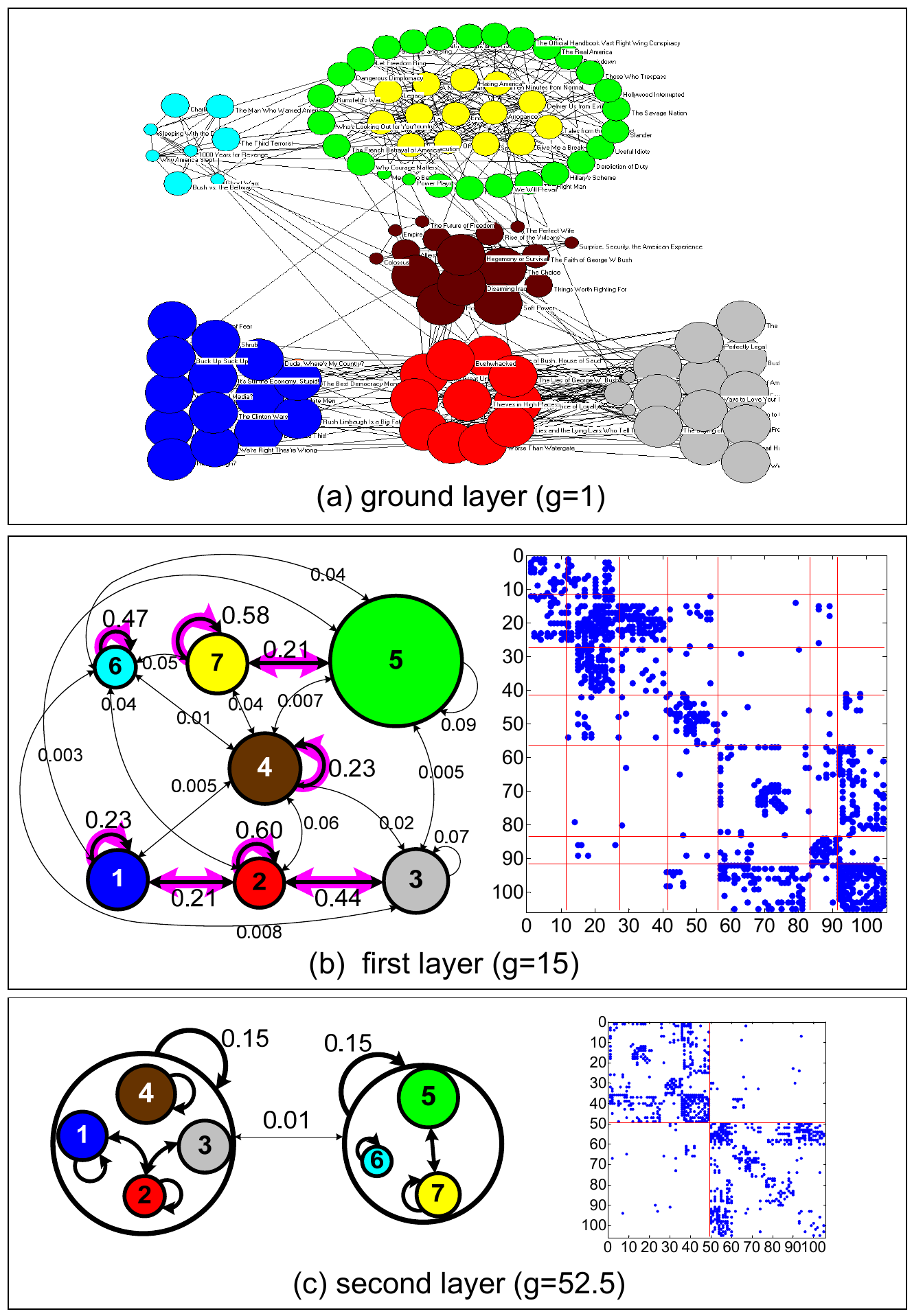,scale=0.65}}}
%  \caption{Mining granular association rules from a co-occurrence net.
%The network in (a) encodes the co-purchased relationship among 105
%books, in which the nodes with large, median and small sizes are
%labeled by "liberal", "conservative" and "neutral(including one
%unlabeled)", respectively. A two-layer hierarchical organization is
%detected. In the first layer, seven blocks are detected. As before,
%the matrix as shown in (b) has been transformed by putting together
%the nodes within the same blocks which labeled by "block 1" to
%"block 7" from the top down. Correspondingly, in the network shown
%in (a), nodes are colored according to their block IDs
%(specifically, block 1-blue, block 2-red, block 3-gray, block
%4-brown, block 5-green, block 6-cyan and block 7-yellow). In the
%blocking model as shown in (b), each block is colored by the same
%coloring schema defined above, the sizes of blocks are proportional
%to the number of nodes they contain, respectively; the arrow lines
%to be reserved in its reduced blocking model are highlighted by
%thicker arrow lines. The blocking model and its corresponding matrix
%of the second layer are shows in (c), in which a global
%two-community structure is detected. The granularity of the network
%corresponding to different layers is given by $g$.\label{polbooks}}
%\end{figure}

\subsection{Priori approximation} Without considering the priori of
the model, the above-proposed likelihood maximization algorithm will
be cursed by the overfitting problem. That is, $L(N|X,B_g)$ will
monotonically increase as $K$ approaching to $L$. In this section,
we will discuss how to approximately estimate the prior $P(X|B_g)$
by means of the information theory.

Note that $1\leq K\leq L=n/g$, which implies that the coarser
granularity the smaller $K$. It will be shown in the following that
a smaller $K$ will indicate a less complexity of $X$. So, we have: a
coarser granularity prefer simpler models, which can be
mathematically written as $P(X|B_g)= \eta(X)^g$, where the function
$\eta(X)$ measures the complexity of $X$ in terms of its parameters.
In this work, we set $\eta(X)=P(X|B_1)=P(X)$, where $P(X)$ is the
priori of $X$ in the hypothesis space in which $K$ can be freely
valued from 1 to $n$. According to Shannon and Weaver
\cite{shannon1949}, $\ln(1/P(X))$ is the minimum description length
of $X$ with a prior $P(X)$ in its model space. Let $OC(X)$ denote
the optimal coding schema for $X$, and let $L_{OC}(X)$ be the
minimum description length of $X$ under this schema. We have: $-\ln
P(X|B_g)=-g\ln P(X)=gL_{OC}(X)$. Now, to estimate the prior
$P(X|B_g)$ is to design a good coding (or compressing) schema as
close to $OC$ as possible.

In terms of the parameter of $X$, i.e.,$\Theta$, $\Delta$, $Z$ and
$\Omega$, we have (see SI):

\begin{equation}
\Phi=\Theta B_gZ D^{-1}, \quad \Psi=\Delta B_gZ D^{-1}
\end{equation} \noindent where $D=diag(n\Omega)$.

Note that matrices $Z$ and $B_g$ can be compressed into a map $y$,
where $y(i)=k$ if the entry $(i,k)$ of $B_gZ$ is equal to one. Given
$y$, $\Phi$ and $\Psi$, node couplings $p_{ij}$ and $q_{ij}$ can be
measured by:

\begin{equation}
p_{ij}= \phi_{y(i),y(j)},\quad q_{ij}= \psi_{y(i),y(j)} \label{MDL}
\end{equation}

Eq.\ref{MDL} says that, all node couplings can be approximately
characterized by $y$, $\Phi$ and $\Psi$. In other word, the
compressing schema close to $OC(X)$ we have found out is:
\begin{equation}
\widehat{OC}(X) = (K, \Phi_{K\times K}, \Psi_{K\times K}, y_{n\times
2})
\end{equation}

Now, we have,

$L_{\widehat{OC}}(X)=1\times (-\ln\frac{1}{1})+
2K^2(-\ln\frac{1}{K^2})+2n(-\ln\frac{1}{2n})$

$\quad\quad\quad\quad= 2K^2\ln K^2 + 2n\ln2n$

Moreover, we have:
\begin{equation}
\begin{array}{l}
\arg\max_{X} P(X|N,B_g) \\
=\arg\min_{X} (-\ln P(N|X,B_g) -\ln P(X|B_g)) \\
=\arg\min_{X} (-\ln P(N|X,B_g) + gL_{\widehat{OC}}(X)) \\
=\arg\min_{X} (-L(N|X,B_g) + g(2K^2\ln K^2 + 2n\ln2n)) \\
=\arg\min_{X} (-L(N|X,B_g) + 2gK^2\ln K^2)
\end{array} \label{objective}
\end{equation}

Eq.\ref{objective} tries to seek a good tradeoff between the
accuracy of model (or the precision of fitting observed data)
measured by the likelihood of a network, and the complexity of a
model (or the generalization ability to predict new data) measured
by its optimal coding length.

\subsection{Model selection} For a given $K$, the penalty term is a
constant, and thus to maximize $P(X|N,B_g)$ is to maximize
$L(N|X,B_g)$. In our algorithm, $K$ will be systematically checked
from 1 to $L$, and the model with the minimum sum of negative
likelihood and penalty will be returned as the optimal one. In the
landscape of $K$ and $P(X|N,B_g)$, a well-like curve will be shaped
during the whole search process (see SI). So, in practice, rather
than mechanically checking $K$ for exact $L$ times, an ongoing
searching can be stopped after it has safely passed the well bottom.
By means of this greedy strategy, the efficiency of our algorithm
would be greatly improved. The complete algorithm is given in the
SI.

\subsection{Hierarchy construction}  Assume we have constructed $h$
layers, in which the $i-th$ layer corresponds to a blocking model
characterized by $B_{g_i}$. Now, we want to construct the $(h+1)-th$
layer by selecting an $X$ with a maximum posterior given a set of
blocking models on different layers. We have shown that (see SI):
$P(X|N,B_{g_1},\cdots,B_{g_h})\propto
P(N|X,B_{g_h})P(X|B_{g_h})\propto P(N|X,B_{g_h})P(X)^{g_h}$. This
Markov-like process indicates that the new model to be selected for
layer $h$ is only based on the information of the layer $h-1$. So,
for a given network, its hierarchical organization can be
incrementally constructed as follows.

Firstly, we construct the first layer by taking each node as one
block, and cluster it into $K_1$ clusters by selecting an model with
a maximum $P(X|N,B_1)$. Next, we form $B_{n/K_1}$ by capsuling each
cluster on the first layer as one block, and cluster these blocks
into $K_2$ clusters by selecting an new model with a maximum
$P(X|N,B_{n/K_1})$, which forms the second layer. We repeat the same
procedure until it converges (the cluster number obtained keeps
fixed). In this way, the number of layers of a hierarchical
organization will be automatically determined. The above procedure
can be seen as a semi-supervised learning process; in the cases that
the granularity is more than one, we have already known a priori
that which nodes will be definitely within the same clusters. As the
layer in the constructed hierarchy increases, the homogeneity of the
nodes within the same cluster in terms of their connection profiles
keeps degenerating, and correspondingly more global patterns are
allowed to be observed by tolerating such increasing diversities.

\subsection{Isomorphism subgraph matching} Based on the obtained
blocks in the level $h+1$ of the hierarchy, all potential patterns
hidden in the level $h$ can be revealed through an isomorphism
subgraph matching procedure, whose input is the block
feedforward-coupling matrix $\Phi$. First we construct a reduced
blocking model by taking each block as one node, and for each pair
of blocks $p$ and $q$ we insert a link from $p$ to $q$ if
$\phi_{pq}$ is above a threshold computed based on $\Phi$ (see SI).
Then for each block, we pick up the matched isomorphism subgraphs it
will be involved in, and put them into categorized reservoirs
labeled by different patterns. During this procedure, the subgraph
to be put into a reservoir will be discarded if it is a subgraph of
an existing one, as illustrated by Fig.\ref{worldtrade}d.

\section{Applications}
\subsection{Exploring the cash flow patterns of the world trade system}
The discovered multiplex structures as well as their granular
couplings can be used to understand some dynamics of networks. Here
we give one example to show how cash possibly flow through a world
trade net. Fig. \ref{worldtrade}a shows a directed network encoding
the trade relationship among eighty countries in the world in 1994,
which was originally constructed by Nooy \cite{nooy} based on the
commodity trade statistics published by the United Nations. Nodes
denote countries, and each arc $i\rightarrow j$ denotes the country
$i$ imported high technology products or heavy manufactures from the
country $j$. Analogous to the "structural world system positions"
initially suggested by Smith and White based on their analysis of
the international trade from 1965 to 1980 \cite{GDP}, the eight
countries in 1994 were categorized into three classes according to
their economic roles in the world: core, semi-periphery and
periphery \cite{nooy}. Accordingly, in the visualized network, the
countries labeled by them are distributed along three circles from
inside to outside, respectively.

A two-layer hierarchical organization has been constructed by our
system, as illustrated in Fig. \ref{worldtrade}d. A reduced blocking
model is shown in the first layer by neglecting trivial couplings,
in which six blocks are obtained. By referring the matrix of the
network as presented in Fig. \ref{worldtrade}b, one can observe that
the nodes within the same blocks demonstrate homogeneous row as well
as column connection profiles. By referring to their couplings, ten
isomorphism subgraphs of the patterns as defined in Fig.
\ref{blocking} are recognized, which, respectively, are one
authority, four communities, three hubs and two outliers, as shown
in Fig. \ref{worldtrade}e. Quite a few interesting things can be
read from these uncovered multiplex structures. Globally, the
countries near center tend to have larger out-degrees, while those
far from center have smaller even zero out-degrees. Specifically,
(1) according to the coupling strength from strong to weak, three
detected hubs can be ranked into the sequence of blocks 4, 1 and 3.
The first two hubs consist of the "core" of the trade system except
for Spain and Denmark; (2) the countries from blocks 3, 2 and 6
consist of the backbone of the "semi-periphery"; (3) more than a
half of "periphery" countries (10 of 17) have zero out-degrees; (4)
interestingly, the detected blocks are also geography-related, as
illustrated in Fig. \ref{worldtrade}c. Most countries of hubs 4 and
1 are from western Europe expect America, China and Japan; most of
hub 3 are from southeastern Asia; most of the community block 6 are
from north or south America; most of the outlier block 2 are from
eastern Europe; most of the outlier block 5 are from Africa or some
areas of Asia.

In the second layer, a macroscopic hub-outlier pattern with strong
couplings is recognized. Hub blocks 4 and 1 in the first layer
collectively form a more global hub of the whole network as the core
of the entire trade system; other blocks form a more global outlier
of the network corresponding to the semi-periphery and periphery of
the trade system. This nested hub-outlier patterns perhaps give us
an evidence about how cash flowed through the world in different
levels in 1994. Note that arc $i\rightarrow j$ denotes that country
$i$ imported commodities from country $j$, which also indicates that
the spent cash has flowed from $i$ to $j$. In this way, one can
image cash flows along these arcs from one country to another.
According to the global pattern in the second layer, the dominant
cash flux will flow from the core countries to themselves with a
probability 0.6, and to the rest with a probability 0.57. Locally,
the blocking model in the first layer shows the backbone of the cash
flow through the entire world with their respective strength in
terms of block couplings, as illustrated in Fig. \ref{worldtrade}d.

\subsection{Mining granular association rules from networks}
When a network encodes the co-occurrence of events, its underlying
node- or block-coupling matrices would imply the probabilistic
associations among these events in different granular levels,
respectively. Formally, we have: \emph{node association rule (NAR)}:
$i\rightarrow j \:\langle p_{ij}\rangle$, and \emph{block
association rule (BAR)}: $B_p\rightarrow
B_q\:\langle\phi_{pq}\rangle$. A \emph{NAR}  means that event $j$
would happen with a probability $p_{ij}$ given event $i$ happens. A
\emph{BAR} means that any event of block $q$ would happen with a
probability $\phi_{pq}$ given any event of block $p$ happens. Such
association rules with different granularities can be used in making
prediction in a wide range of applications, such as online
recommender systems. We will demonstrate this idea with a political
book co-purchasing network compiled by V. Krebs, as given in Fig.
\ref{polbooks}(a), where nodes represent books about US politics
sold by the online bookseller Amazon.com, and edges connects pairs
of books that are frequently co-purchased, as indicated by the
"customers who bought this book also bought these other books"
feature on Amazon. Moreover, these books have been labeled as
"liberal", "neutral" or "conservative" according to their stated or
apparent political alignments based on the descriptions and reviews
of the books posted on Amazon.com \cite{polbook}.

A two-layer hierarchical organization has been detected by our
system as shown in Fig. \ref{polbooks}. The blocking model of the
first layer is shown in Fig. \ref{polbooks}(b). By matching
isomorphism subgraphs in its reduced blocking model, nine patterns
are recognized, which respectively are five communities (blocks
1,2,4,6,7), two cores (blocks 2 and 7), two outliers (blocks 3 and
5) and a bow tie (blocks 1,2,3). Note that, in indirected networks,
the core of a bow tie (block 2 in this case) can be seen as the
overlapping part of its two wings (blocks 1 and 3 in this case) by
neglecting the direction of links. In the second layer, a
macroscopic community structure is recognized, as shown in Fig.
\ref{polbooks}(c). Interestingly, the left and right communities can
be globally labeled as "left-wing" and "right-wing" according to the
types of the books they contain respectively. Such a global pattern
can be seen as one macroscopic \emph{BAR}: the books with common
labels would be co-purchased with a great chance (about 15\%);
while, those with different labels are rarely co-purchased (only
with a chance of 1\%).

When zooming in to both global communities in the second layer, one
will obtain $7 \times 7$ mesoscopic \emph{BAR}s encoded by the
block-coupling matrix $\Phi$, as illustrated by the weighted arrow
lines Fig. \ref{polbooks}(b). As an example, we list the \emph{BRA}s
related to the block 2 in a decreasing sequence of association
strength. $B_2\rightarrow B_2\langle0.60\rangle$; $B_2\rightarrow
B_3\langle0.44\rangle$; $B_2\rightarrow B_1\langle0.21\rangle$;
$B_2\rightarrow B_4\langle0.06\rangle$; $B_2\rightarrow
B_6\langle0.04\rangle$; $B_2\rightarrow B_5\langle0\rangle$;
$B_2\rightarrow B_7\langle0\rangle$. Such mesoscopic association
rules would help booksellers adaptively adjust their selling
strategies to determine what kinds of stocks they should increase or
decrease based on the statistics of past sales. For example, if they
find the books labeled as block 2 are sold well, they may
correspondingly increase the order of books labeled as blocks 1 and
3 besides block 2, while they may simultaneously decrease the order
of books labeled as blocks 5 or 7.

More specifically, with the aid of \emph{NAR}s, booksellers would be
able to estimate the chance that customers will spend their money on
a book $j$ if they have already bought book $i$ by referring to
$i\rightarrow j \langle p_{ij}\rangle$. Such microscopic rules would
provide booksellers the suggestions on what specific books should be
listed according to what sequence in the recommending area of the
web page advertising a book. For example, for the book \emph{"The
Price of Loyalty" }, the most worth recommended books are listed as
follows according to the coupling strength to it: \emph{Big
Lies}$\langle0.91\rangle$; \emph{Bushwhacked}$\langle0.73\rangle$;
\emph{Plan of Attack}$\langle0.73\rangle$; \emph{The Politics of
Truth}$\langle0.73\rangle$; \emph{The Lies of George
W.Bush}$\langle0.73\rangle$; \emph{American
Dynasty}$\langle0.64\rangle$; \emph{Bushwomen}$\langle0.64\rangle$;
\emph{The Great Unraveling}$\langle0.64\rangle$; \emph{Worse Than
Watergate}$\langle0.64\rangle$.

\section{Conclusions}\label{sec:4}
In this work, we have shown through examples that the structural
patterns coexisting in the same real-world complex network can be
miscellaneous, overlapped or nested, which collaboratively shape a
heterogeneous hierarchical organization. We have proposed a
framework based on the concept of granular couplings and the
proposed granular blocking model to uncover such multiplex
structures from networks. From the output of patterns, hierarchies
and granular couplings generated by our approach, one can analyze or
even predict some dynamics of networks, which are helpful for both
theoretical studies and practical applications.

Moreover, based on the rationale behind this work, we suggest that
the evolution of a real-world network may be driven by the
co-evolution of its structural patterns and its underlying
couplings. On one hand, statistically significant patterns would be
gradually highlighted and emergently shaped by the aggregations of
homogeneous individuals in terms of their couplings. On the other
hand, individuals would adaptively adjust their respective couplings
according to the currently evolved structural patterns.

\newpage
\centerline{\textbf{Support Information}}

\appendix{}{}

\section{Proofs and algorithms}\label{sec:1}

\begin{Prop}\label{prop:1}
For an indirected network, its feedforward-coupling matrix $P$ is
equal to its feedback-coupling matrix $Q$.
\end{Prop}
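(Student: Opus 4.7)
The plan is to trace the claim back to the symmetry of the adjacency matrix in the undirected case and then propagate that symmetry through the definitions $P=B_gZ\Theta$ and $Q=B_gZ\Delta$. Since the paper stipulates that an undirected edge is modeled as two opposite directed links, the adjacency matrix satisfies $A=A^{T}$, i.e.\ $a_{ij}=a_{ji}$ for all $i,j$. My strategy is to show that this symmetry forces the feedforward and feedback cluster parameters to coincide, so that $\Theta=\Delta$, from which $P=Q$ is immediate.

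First I would inspect the EM update equations (Eq.~\ref{eq:theta}) side by side. Observe that the formulas for $\theta_{kj}$ and $\delta_{kj}$ differ only by the substitution $a_{ij}\leftrightarrow a_{ji}$ in the numerator, while the denominators are identical. Under $a_{ij}=a_{ji}$ the two numerators collapse to the same expression, so any fixed point of the M-step satisfies $\theta_{kj}=\delta_{kj}$ for every $k,j$. Second, I would verify that the E-step update (Eq.~\ref{eq:gamma}) preserves this equality: if $\Theta^{(t)}=\Delta^{(t)}$, then the product $\Pi_{j=1}^{n}f(\theta_{kj},a_{ij})f(\delta_{kj},a_{ji})$ is invariant under the swap $\Theta\leftrightarrow\Delta$, so the posterior weights $\gamma_{lk}^{(t+1)}$ are well-defined and, when fed back into the M-step, reproduce $\Theta^{(t+1)}=\Delta^{(t+1)}$. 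Starting EM from any symmetric initialization (e.g.\ $\Theta^{(0)}=\Delta^{(0)}$) therefore yields a trajectory staying on the diagonal $\{\Theta=\Delta\}$, and the converged optimum inherits this equality.

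A cleaner, initialization-free version of the same argument is to note that the log-likelihood (\ref{likelihood}) is invariant under the simultaneous swap $(\Theta,\Delta)\mapsto(\Delta,\Theta)$ whenever $A=A^{T}$, because the swap permutes the factors $f(\theta_{kj},a_{ij})$ and $f(\delta_{kj},a_{ji})$ inside the product. Hence the likelihood descends to a function on the quotient under this involution, and among its maximizers one may always pick a symmetric representative with $\Theta=\Delta$; the stochastic-block identifiability implicit in the paper's MAP formulation picks out this representative. Once $\Theta=\Delta$ is established, the identities $P=B_gZ\Theta$ and $Q=B_gZ\Delta$ give $P=Q$ immediately.

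The main obstacle I anticipate is the uniqueness issue in the second paragraph: in principle EM could converge to an asymmetric critical point even though the objective is symmetric. I would address this either by the invariance argument above (choosing the symmetric branch) or by arguing directly that any asymmetric fixed point $(\Theta^\ast,\Delta^\ast)$ can be averaged, or replaced by its swap, without decreasing the likelihood, so a symmetric optimum always exists. Once this is settled the conclusion $P=Q$ is a one-line substitution.
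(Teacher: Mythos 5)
Your proposal is correct and follows essentially the same route as the paper: both arguments observe that the update formulas for $\theta_{kj}$ and $\delta_{kj}$ in Eq.~4 differ only by the swap $a_{ij}\leftrightarrow a_{ji}$ over a common denominator, so $A=A^{T}$ forces $\Theta=\Delta$, and then $P=B_gZ\Theta=B_gZ\Delta=Q$. Your extra worry about asymmetric EM fixed points is moot, since the M-step uses the same $\gamma_{lk}$ for both parameters and therefore outputs $\Theta=\Delta$ for \emph{any} $\gamma$, which is all the paper's one-line version relies on.
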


\begin{proof}

\noindent\smallskip

\noindent $p_{ij}=P(i\rightarrow j|y=k)$

\noindent $=\sum_{k=1}^K m_{ik}P(i\rightarrow j|y=k)$

\noindent$=\sum_{k=1}^K (\sum_{l=1}^L b_{il}z_{lk})P(i\rightarrow
j|y=k)$

\noindent$= \sum_{l=1}^L \sum_{k=1}^K b_{il}z_{lk}\theta_{kj}$

\noindent where $i\rightarrow j$ denote the event that node $v_i$
couples with $v_j$, and $y=k$ denote the event that $v_i$ is labeled
by cluster $k$; $m_{ik}=1$ if $v_i$ is labeled by cluster $k$,
otherwise $m_{ik}=0$. So we have

\centerline{$P=B_gZ\Theta$.}

\noindent $q_{ij}=P(i\dashleftarrow j|y=k)$

\noindent $=\sum_{k=1}^K m_{ik}P(i\dashleftarrow j|y=k)$

\noindent$=\sum_{k=1}^K (\sum_{l=1}^L b_{il}z_{lk})P(i\dashleftarrow
j|y=k)$

\noindent$= \sum_{l=1}^L \sum_{k=1}^K b_{il}z_{lk}\delta_{kj}$

\noindent where $i\dashleftarrow j$ denote the event that node $v_i$
except to be coupled by $v_j$. So we have

\centerline{$Q=B_gZ\Delta$.}

If $A$ is symmetry, from the Eq.4 in the article, we have

\smallskip
\centerline{$\theta_{kj}=\frac{\sum_{l=1}^L\sum_{b_{il}\neq 0}
a_{ij}\gamma_{lk}}{\sum_{l=1}^L\sum_{b_{il}\neq 0}\gamma_{lk}}=
\frac{\sum_{l=1}^L\sum_{b_{il}\neq 0}
a_{ji}\gamma_{lk}}{\sum_{l=1}^L\sum_{b_{il}\neq
0}\gamma_{lk}}=\delta_{kj}$.}

So we have $P=Q$.
\end{proof}

%\subsection{Granular blocking model}
\bigskip
\begin{Prop}\label{prop:1}
\begin{equation}
L(N|X,B_g)=\sum_{l=1}^L\sum_{b_{il}\neq 0}\ln \sum_{k=1}^K
\Pi_{j=1}^n
f(\theta_{kj},a_{ij})f(\delta_{kj},a_{ji})\omega_k\label{likelihood}
\end{equation}
\end{Prop}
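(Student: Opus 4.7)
The plan is to derive the expression as the log of a naive-Bayes style pseudo-likelihood, in the spirit of the Newman--Leicht mixture model for networks. I would start from the generative model implicit in the granular coupling framework: each node $i$ carries a latent cluster label $y_i$ drawn from the categorical distribution with parameters $\Omega=(\omega_1,\ldots,\omega_K)$, and, conditional on $y_i=k$, all entries $a_{ij}$ and $a_{ji}$ for $j=1,\ldots,n$ are independent Bernoulli random variables with success probabilities $\theta_{kj}$ and $\delta_{kj}$, respectively. Since $\theta^a(1-\theta)^{1-a}$ is exactly $f(\theta,a)$, the conditional probability of the $i$-th row and column of $A$ given $y_i=k$ becomes the double product $\prod_{j=1}^n f(\theta_{kj},a_{ij})\,f(\delta_{kj},a_{ji})$.

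Next I would marginalize over the unobserved $y_i$ by summing over $k$ weighted by the prior $\omega_k$, giving the per-node marginal $\sum_{k=1}^K \omega_k \prod_{j=1}^n f(\theta_{kj},a_{ij}) f(\delta_{kj},a_{ji})$. I would then invoke the independence assumption explicitly stated earlier in the paper (``nodes will independently couple with others regulated by such couplings'') to factor the overall likelihood $P(N\mid X,B_g)$ as the product of these per-node marginals, and take logs to convert the product into a sum. Re-indexing the outer sum from $i=1,\ldots,n$ via the partition induced by $B_g$ --- namely rewriting $\sum_{i=1}^n=\sum_{l=1}^L\sum_{i:b_{il}\neq 0}$ --- produces the stated expression. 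Note that the role of $B_g$ in this identity is purely bookkeeping through the re-indexing; the constraint that nodes in the same block share a cluster label is enforced elsewhere, via $Z$ in the EM iteration, and does not alter the per-node form of the summand.

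The main subtlety I expect to have to flag is that the quantity derived is a pseudo-likelihood rather than a strict joint likelihood of the adjacency matrix: each binary observation $a_{ij}$ contributes twice to the sum, once through the row factor of node $i$ (as $f(\theta_{k,j},a_{ij})$) and once through the column factor of node $j$ (as $f(\delta_{k',i},a_{ij})$). This double counting means the derivation is not a rigorous factorization of $P(A\mid X,B_g)$ over independent observations; it is the standard naive-Bayes surrogate used throughout the network mixture-model literature cited in the paper, and should be identified as such so that the subsequent EM updates in Eqs.~\ref{eq:theta}--\ref{eq:gamma} are understood as maximizing this pseudo-likelihood rather than an exact one.
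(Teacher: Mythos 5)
Your derivation matches the paper's proof essentially step for step: factor $P(N|X,B_g)$ over nodes (each node's full row-and-column linkage profile treated as one independent observation), marginalize the latent label against $\omega_k$, expand each conditional as the product of Bernoulli factors $f(\theta_{kj},a_{ij})f(\delta_{kj},a_{ji})$, and re-index $\sum_{i=1}^n$ as $\sum_{l=1}^L\sum_{b_{il}\neq 0}$. Your added caveat that each $a_{ij}$ contributes twice --- once in node $i$'s row factor and once in node $j$'s column factor, so the quantity is a composite/pseudo-likelihood rather than an exact factorization of $P(A\mid X,B_g)$ --- is correct and is a point the paper's proof silently passes over.
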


\noindent where $f(x,y)=x^y(1-x)^{1-y}$.

\begin{proof}
Let $v=i$ denote the event that a node with linkage structure
$<a_{i1},\cdots,a_{in},a_{1i},\cdots,a_{ni}>$ will be observed in
network $N$. Let $y=k$ denote the event that the cluster label
assigned to a node is equal to $k$. Let $i\rightarrow_{a_{ij}} j$
denote the event that node $v_i$ link to node $v_j$ or not depending
on $a_{ij}$. Let $i\leftarrow_{a_{ji}} j$ denote the event that node
$v_i$ will be linked by node $v_j$ or not depending on $a_{ji}$. We
have:

\smallskip

\noindent $L(N|X,B_g)=\ln\Pi_{i=1}^nP(v=i)=\sum_{i=1}^n\ln P(v=i)$

\noindent$=\sum_{i=1}^n\ln P((v=i) \cap(\cup_{k=1}^Ky=k))$

\noindent$=\sum_{i=1}^n\ln \sum_{k=1}^K P(v=i,y=k)$

\noindent$=\sum_{i=1}^n\ln \sum_{k=1}^K (P(v=i|y=k)P(y=k))$

\noindent$=\sum_{i=1}^n\ln \sum_{k=1}^K
(P(<a_{i1},\cdots,a_{in},a_{1i},\cdots,a_{ni}>|y=k)P(y=k))$

\noindent$=\sum_{i=1}^n\ln \sum_{k=1}^K (\Pi_{j=1}^n
P(i\rightarrow_{a_{ij}} j|y=k)P(i\leftarrow_{a_{ji}} j|y=k)P(y=k))$

\noindent$=\sum_{i=1}^n\ln \sum_{k=1}^K (\Pi_{j=1}^n
(\theta_{kj}^{a_{ij}}(1-\theta_{kj})^{1-a_{ij}})(\delta_{kj}^{a_{ji}}(1-\delta_{kj})^{1-a_{ji}})\omega_k)$

\noindent$=\sum_{i=1}^n\ln \sum_{k=1}^K (\Pi_{j=1}^n
f(\theta_{kj},a_{ij})f(\delta_{kj},a_{ji})\omega_k)$

\noindent$=\sum_{l=1}^L\sum_{b_{il}\neq 0}\ln \sum_{k=1}^K
(\Pi_{j=1}^n f(\theta_{kj},a_{ij})f(\delta_{kj},a_{ji})\omega_k)$
\end{proof}

\bigskip
\begin{Prop}\label{prop:2}
\begin{eqnarray}
L(N,Z|X,B_g)= \sum_{l=1}^L\sum_{b_{il}\neq 0}\sum_{k=1}^K
z_{lk}(\sum_{j=1}^n (\ln f(\theta_{kj},a_{ij})+ \ln
f(\delta_{kj},a_{ji}))+\ln\omega_k)\label{joint}
\end{eqnarray}
\end{Prop}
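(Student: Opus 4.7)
The plan is to mirror the derivation of Proposition 2, but with the cluster indicators $Z$ treated as observed rather than marginalized. I would start by writing $P(N,Z\mid X,B_g)$ as a product over nodes. Since each row of $Z$ has exactly one nonzero entry, a node $i$ in block $l$ has a unique cluster label $k^\ast$ with $z_{lk^\ast}=1$, so the contribution of node $i$ to the joint is simply $P(v=i,\,y=k^\ast\mid X,B_g)$.

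The key rewriting device is that for any positive scalars $x_1,\dots,x_K$, the hard-assignment condition $z_{lk}\in\{0,1\}$ with $\sum_k z_{lk}=1$ gives
\[
x_{k^\ast}=\prod_{k=1}^K x_k^{z_{lk}}, \qquad \ln x_{k^\ast}=\sum_{k=1}^K z_{lk}\ln x_k .
\]
This is precisely what turns the $\ln\sum_k$ that appeared in Proposition 2 into a $\sum_k z_{lk}\ln$ in the complete-data case.

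Next, I would reuse the conditional-independence factorization established in Proposition 2,
\[
P(v=i,y=k\mid X,B_g)=\omega_k\,\prod_{j=1}^n f(\theta_{kj},a_{ij})\,f(\delta_{kj},a_{ji}),
\]
substitute it into $\ln P(v=i,y=k^\ast\mid X,B_g)$, apply the indicator identity above to pull $z_{lk}$ through the logarithm, and convert the product over $j$ to a sum of logarithms. Finally, summing over $i=1,\dots,n$ and partitioning this index set by blocks (i.e.\ $\sum_{i=1}^n(\cdot)=\sum_{l=1}^L\sum_{b_{il}\neq 0}(\cdot)$) yields the stated formula.

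There is no real obstacle here; the derivation is a routine specialization of Proposition 2 in which the only nontrivial step is the hard-assignment identity $\ln\prod_k x_k^{z_{lk}}=\sum_k z_{lk}\ln x_k$, which is what distinguishes the complete-data log-likelihood from the marginal one. All remaining manipulations are the same bookkeeping as in the previous proposition.
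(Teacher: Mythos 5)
Your proposal is correct and follows essentially the same route as the paper's own proof: the paper likewise writes the complete-data likelihood as $\prod_i P(v=i,y=y(i))$, uses the one-hot indicator identity (in the form $\ln\sum_k m_{ik}x_k=\sum_k m_{ik}\ln x_k$, trivially equivalent to your $\ln\prod_k x_k^{z_{lk}}=\sum_k z_{lk}\ln x_k$), substitutes the same factorization of $P(v=i,y=k)$ from the preceding proposition, and regroups $\sum_{i=1}^n$ as $\sum_{l=1}^L\sum_{b_{il}\neq 0}$. No gaps.
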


\begin{proof} Let $y(i)$ denote the cluster label assigned to node $i$ under the
given partition $Z$, we have:

\noindent $L(N,Z|X,B_g)$

\noindent $=\ln\Pi_{i=1}^n P(v=i,y=y(i))$

\noindent $=\sum_{i=1}^n \ln \sum_{k=1}^K m_{ik}P(v=i,y=k)$

\noindent $=\sum_{i=1}^n\ln \sum_{k=1}^K m_{ik}P(v=i|y=k)P(y=k)$

\noindent $=\sum_{i=1}^n\sum_{k=1}^K \ln(P(v=i|y=k)P(y=k))^{m_{ik}}$

\noindent $=\sum_{i=1}^n\sum_{k=1}^K m_{ik}\ln(P(v=i|y=k)P(y=k))$

\noindent $=\sum_{i=1}^n \sum_{k=1}^K m_{ik}\ln(\Pi_{j=1}^n
(\theta_{kj}^{a_{ij}}(1-\theta_{kj}^{1-a_{ij}})\delta_{kj}^{a_{ji}}(1-\delta_{kj}^{1-a_{ji}}))\omega_k$

%$=\sum_{i=1}^n\sum_{k=1}^K(m_{ik}(\sum_{j=1}^n(a_{ij}\ln\theta_{kj}+(1-a_{ij})\ln(1-\theta_{kj})+a_{ji}\ln\delta_{kj}+(1-a_{ji})\ln(1-\delta_{kj}))+\ln\omega_k))$.

\noindent $=\sum_{i=1}^n\sum_{k=1}^K m_{ik}(\sum_{j=1}^n(\ln
f(\theta_{kj},a_{ij})+ \ln f(\delta_{kj},a_{ji}))+\ln\omega_k)$

\noindent $=\sum_{b_{i1\neq 0}}\sum_{k=1}^K z_{1k}(\sum_{j=1}^n(\ln
f(\theta_{kj},a_{ij})+ \ln
f(\delta_{kj},a_{ji}))+\ln\omega_k)+\cdots$

$+\sum_{b_{iL\neq 0}}\sum_{k=1}^K z_{Lk}(\sum_{j=1}^n(\ln
f(\theta_{kj},a_{ij})+ \ln f(\delta_{kj},a_{ji}))+\ln\omega_k)$

\noindent $=\sum_{l=1}^L\sum_{b_{il}\neq 0}\sum_{k=1}^K
z_{lk}(\sum_{j=1}^n(\ln f(\theta_{kj},a_{ij})+ \ln
f(\delta_{kj},a_{ji}))+\ln\omega_k)$.

\end{proof}

Notice that, in the proofs of Props \ref{prop:1} and \ref{prop:2},
all probabilities such as $P(y=k|v=i)$ and $P(y=k)$ are discussed
under the conditions of $X$ and $B_g$. To simplify the equations, we
omit them without losing correctness.
\bigskip

\begin{Prop}
\begin{equation}
\gamma_{lk}=\frac{1}{\sum_{i=1}^n b_{il}}\sum_{b_{il}\neq
0}\frac{\Pi_{j=1}^n f(\theta_{kj},a_{ij})f(\delta_{kj},a_{ji})
\omega_k}{\sum_{k=1}^K \Pi_{j=1}^n
f(\theta_{kj},a_{ij})f(\delta_{kj},a_{ji}) \omega_k}
\label{eq:gamma}
\end{equation}
\end{Prop}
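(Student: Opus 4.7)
The plan is to compute $\gamma_{lk}=E[z_{lk}]=P(y=k\mid b=l,X,B_g)$ by first obtaining the posterior cluster label of a single node via Bayes' theorem, and then relating the block-level posterior to the node-level posteriors through the block membership matrix $B_g$.

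First, I would pin down the per-node posterior. For a node $v_i$ with connectivity profile $\langle a_{i1},\ldots,a_{in},a_{1i},\ldots,a_{ni}\rangle$, Bayes' theorem gives
\[
P(y=k\mid v=i,X,B_g)=\frac{P(v=i\mid y=k,X,B_g)\,P(y=k\mid X,B_g)}{\sum_{k'=1}^K P(v=i\mid y=k',X,B_g)\,P(y=k'\mid X,B_g)}.
\]
The likelihood $P(v=i\mid y=k,X,B_g)$ has already been factored in the proof of Proposition~2 (using the independence of the outgoing and incoming link events conditional on the cluster label), yielding
\[
P(v=i\mid y=k,X,B_g)=\Pi_{j=1}^{n}f(\theta_{kj},a_{ij})f(\delta_{kj},a_{ji}),
\]
and by definition $P(y=k\mid X,B_g)=\omega_k$. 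Substituting these into Bayes' formula reproduces the inner fraction of the claimed expression exactly.

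Next, I would lift this node-level posterior to the block level. Since the blocking $B_g$ forces all nodes with $b_{il}\neq 0$ to share a common cluster label, the indicator $z_{lk}$ that block $l$ is labeled $k$ is the common value of the per-node indicators $m_{ik}$ across $i\in l$. Taking expectations and using the fact that each such node contributes an observation of the shared label, $E[z_{lk}]$ equals the average of the per-node posteriors $P(y=k\mid v=i,X,B_g)$ over the $\sum_{i=1}^n b_{il}$ nodes in block $l$, which is exactly the outer sum and the $1/\sum_{i=1}^n b_{il}$ normalization. Combining with the expression from the previous step gives the stated formula.

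The main obstacle is step two: justifying why the block posterior $E[z_{lk}]$ should be written as the unweighted mean of the per-node posteriors rather than, say, a joint posterior $P(y=k\mid \{v=i:b_{il}\neq 0\})$ which would naturally multiply the per-node likelihoods. The resolution, implicit in the granular blocking setup, is that $B_g$ is treated as a hard constraint on cluster co-assignment but the updates are pooled marginally across nodes (in the same spirit as the E-step of the EM derivation leading to Eq.~(4)); this averaging is what keeps $\gamma_{lk}$ consistent with the per-node updates of $\theta_{kj}$, $\delta_{kj}$ and $\omega_k$. Once this convention is made explicit, the remainder of the proof is routine algebra.
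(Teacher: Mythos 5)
Your proposal follows essentially the same route as the paper: apply Bayes' theorem to the per-node posterior $P(y=k\mid v=i)$ using the factored likelihood $\Pi_{j}f(\theta_{kj},a_{ij})f(\delta_{kj},a_{ji})\omega_k$, then obtain $\gamma_{lk}$ as the uniform average of these posteriors over the nodes of block $l$ (the paper justifies the $1/\sum_i b_{il}$ weight as the probability of selecting a node uniformly from the block, which is exactly the convention you make explicit in your final paragraph). The proof is correct and matches the paper's argument.
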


\begin{proof}
let $P(y=k|v=i)$ be the probability that node $i$ belongs to cluster
$k$ given $X$ and $B_g$. We have:

\centerline{ $\gamma_{lk}=P(y=k|b=l,X,B_g)= \sum_{b_{il}\neq
0}\frac{1}{\sum_{i=1}^n b_{il}}P(y=k|v=i)$}

\noindent where $\frac{1}{\sum_{i=1}^n b_{il}}$ is the probability
of selecting node $i$ from block $l$.

\smallskip According to the Bayesian theorem, we have:

\smallskip
\centerline{ $P(y=k|v=i)=\frac{P(y=k)P(v=i|y=k)}{\sum_{k=1}^K
P(y=k)P(v=i|y=k)}$.}

\smallskip
\smallskip Based on the proof of Prop.\ref{prop:1}, we have:

\smallskip
\centerline{ $P(y=k)P(v=i|y=k)=\Pi_{j=1}^n
f(\theta_{kj},a_{ij})f(\delta_{kj},a_{ji}) \omega_k$.}

\smallskip So, we have Eq.\ref{eq:gamma}.
\end{proof}

\bigskip

As an approximate version of Eq.\ref{eq:gamma}, we have:
\begin{eqnarray} \gamma_{lk}=P(y=k|b=l)=P(y=k|v=i,b_{il}\neq 0
)\nonumber
\\=\frac{P(y=k)P(v=i,b_{il}\neq 0|y=k)}{\sum_{k=1}^K
P(y=k)P(v=i,b_{il}\neq 0|y=k)}\:\:\:\nonumber
\\=\frac{\exists_{b_{il}\neq 0}\Pi_{j=1}^n f(\theta_{kj},a_{ij})f(\delta_{kj},a_{ji})
\omega_k}{\sum_{k=1}^K \exists_{b_{il}\neq 0}\Pi_{j=1}^n
f(\theta_{kj},a_{ij})f(\delta_{kj},a_{ji}) \omega_k}
\label{eq:approximate_gamma}
\end{eqnarray}
\noindent where $\exists_{b_{il}\neq 0}$ denotes that randomly
selecting a node from block $l$.

That is, instead of averaging all nodes in the block $l$, the real
value of $\gamma_{lk}$ can be approximately estimated by a randomly
selected node from block $l$.

Correspondingly, an approximate version of the log-likelihood of
Eq.\ref{likelihood} is given by:

\begin{equation}
L(N|X,B_g)=\sum_{l=1}^L N_l(\exists_{b_{il}\neq 0}\ln \sum_{k=1}^K
\Pi_{j=1}^n
f(\theta_{kj},a_{ij})f(\delta_{kj},a_{ji})\omega_k)\label{appro_likelihood}
\end{equation}

\noindent where $N_l$ denotes the size of block $l$.

The time calculating Eqs.\ref{eq:gamma} and \ref{likelihood} will be
bounded by $O(n^2K)$. While, the time calculating
Eqs.\ref{eq:approximate_gamma} and \ref{appro_likelihood} will be
bounded by $O(LnK)$. This will be much efficient for constructing
the hierarchical organizations of networks.

\bigskip

\begin{Theo}
A local optimum of Eq.\ref{likelihood} will be guaranteed by
recursively calculating Eqs.4 and 5 in the article.
\end{Theo}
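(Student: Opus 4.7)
The plan is to recognize that Eqs.~4 and 5 together form an instance of the Expectation--Maximization scheme of Dempster, Laird and Rubin, and to adapt their standard monotonicity argument to the present block-structured setting. The starting point is the trivial identity
\begin{equation*}
L(N|X,B_g) = \sum_{l=1}^L\sum_{b_{il}\neq 0}\ln\sum_{k=1}^K \gamma_{lk}\,\frac{\Pi_{j=1}^n f(\theta_{kj},a_{ij})f(\delta_{kj},a_{ji})\omega_k}{\gamma_{lk}}
\end{equation*}
which holds for any auxiliary distribution $\gamma_{lk}\geq 0$ summing to one over $k$. Applying Jensen's inequality to the concave logarithm turns the right-hand side into a lower bound
\begin{equation*}
L(N|X,B_g) \;\geq\; F(X,\gamma) \;=\; \sum_{l,i,k} \gamma_{lk}\Bigl(\sum_{j=1}^n (\ln f(\theta_{kj},a_{ij}) + \ln f(\delta_{kj},a_{ji})) + \ln\omega_k - \ln\gamma_{lk}\Bigr),
\end{equation*}
with equality precisely when $\gamma_{lk}$ equals the posterior $P(y=k|b=l,X,B_g)$.

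Next I would verify the two halves of the ascent step. First, by construction Eq.~5 is exactly the posterior that makes the Jensen bound tight, so after an E-step one has $L(N|X^{(t)},B_g)=F(X^{(t)},\gamma^{(t+1)})$. Second, the Lagrangian calculation already carried out in the article shows that the updates in Eq.~4 solve $\nabla_X F(X,\gamma^{(t+1)})=0$ under the simplex constraint $\sum_k\omega_k=1$; since $F$ is a sum of terms of the form $c\ln x + c'\ln(1-x)$ in each $\theta_{kj}$ and $\delta_{kj}$ and of $c\ln\omega_k$ in each $\omega_k$, these stationary points are global maximisers in $X$ for fixed $\gamma$. Chaining the two gives
\begin{equation*}
L(N|X^{(t)},B_g) = F(X^{(t)},\gamma^{(t+1)}) \leq F(X^{(t+1)},\gamma^{(t+1)}) \leq L(N|X^{(t+1)},B_g),
\end{equation*}
so the likelihood is non-decreasing. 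Boundedness of $L$ from above (each factor $f(\cdot,\cdot)\,\omega_k$ lies in $[0,1]$) then forces $L(N|X^{(t)},B_g)$ to converge, and a standard compactness-plus-continuity argument on the parameter simplex identifies any limit point of $\{X^{(t)}\}$ as a stationary point of $L(N|\cdot,B_g)$, hence a local optimum.

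The main obstacle I anticipate is the block-averaged form of $\gamma_{lk}$: unlike textbook EM where responsibilities are indexed per observation, here a single $\gamma_{lk}$ is shared across every node $i$ with $b_{il}\neq 0$, and the averaging factor $1/\sum_i b_{il}$ appearing in Eq.~5 must be handled consistently inside the Jensen step. The delicate point is therefore to show that using the block-level posterior (rather than the per-node posterior $P(y=k|v=i,X,B_g)$) still produces a tight lower bound, equivalently that the nodes sharing a block $l$ are exchangeable under $X$ so that their individual posteriors collapse to the common average $\gamma_{lk}$. Once this exchangeability is made explicit, the remainder of the argument reduces to the classical Dempster--Laird--Rubin bookkeeping.
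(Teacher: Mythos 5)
Your proposal follows essentially the same route as the paper's own proof: the classical Dempster--Laird--Rubin argument, with Jensen's inequality producing a lower bound $G$ (your $F$) that is tight at the current parameters, the Lagrangian updates of Eq.~4 maximizing that bound, and the chain $L(N|X^{(s)},B_g)=G(X^{(s)},X^{(s)})\leq G(X^{(s+1)},X^{(s)})\leq L(N|X^{(s+1)},B_g)$ giving monotone ascent. The one ``obstacle'' you flag at the end is real but is not resolved by the paper either: the paper applies Jensen with the per-node posteriors $P(y=k|v=i,X^{(s)},B_g)$ and then silently substitutes the block-averaged $\gamma_{lk}^{(s)}$ of Eq.~5, a step that is exact only when all nodes in a block share the same posterior (e.g.\ at granularity $g=1$); so your version and the paper's stand or fall together on that point.
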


\begin{proof}
From the Proposition\ref{prop:1}, we have:

\noindent $L(N|X,B_g)$

\noindent$=\sum_{i=1}^n \ln P(v=i|X,B_g)$

\noindent$=\sum_{i=1}^n \ln \sum_{k=1}^K P(v=i,y=k|X,B_g)$

\noindent$=\sum_{i=1}^n \ln \sum_{k=1}^K
P(y=k|v=i,X^{(s)},B_g)\frac{P(v=i,y=k|X,B_g)}{P(y=k|v=i,X^{(s)},B_g)}$

\noindent(by Jensen's inequality)

\noindent$\geq\sum_{i=1}^n \sum_{k=1}^K
P(y=k|v=i,X^{(s)},B_g)\ln\frac{P(v=i,y=k|X,B_g)}{P(y=k|v=i,X^{(s)},B_g)}$

\noindent$\equiv G(X,X^{(s)})$.

Furthermore, we have:

\noindent$G(X^{(s)},X^{(s)})$

\noindent$=\sum_{i=1}^n\sum_{k=1}^K
P(y=k|v=i,X^{(s)},B_g)\ln\frac{P(v=i,y=k|X^{(s)},B_g)}{P(y=k|v=i,X^{(s)},B_g)}$

\noindent$=\sum_{i=1}^n\sum_{k=1}^K P(y=k|v=i,X^{(s)},B_g)\ln
P(v=i|X^{(s)},B_g)$

\noindent$=\sum_{i=1}^n \ln P(v=i|X^{(s)},B_g)\sum_{k=1}^K
P(y=k|v=i,X^{(s)},B_g)) $

\noindent$=\sum_{i=1}^n \ln P(v=i|X^{(s)},B_g)$

\noindent$= L(N|X^{(s)},B_g)$.

Let $P(y=k|b=l,X^{(s)},B_g)=\gamma_{ik}^{(s)}$, we have:

\noindent$G(X,X^{(s)})$

\noindent$=\sum_{l=1}^L\sum_{b_{il}\neq 0}\sum_{k=1}^K
\gamma_{lk}^{(s)}\ln P(v=i,y=k|X,B_g)-\sum_{l=1}^L\sum_{b_{il}\neq
0}\sum_{k=1}^K \gamma_{ik}^{(s)}\ln P(y=k|v=i,X^{(s)},B_g)$.

So, we have:

\noindent$\arg\max G(X,X^{(s)})$

\noindent$=\arg\max(\sum_{l=1}^L\sum_{b_{il}\neq 0}\sum_{k=1}^K
\gamma_{lk}^{(s)}\ln P(v=i,y=k|X,B_g)-\sum_{l=1}^L\sum_{b_{il}\neq
0}\sum_{k=1}^K \gamma_{ik}^{(s)}\ln P(y=k|v=i,X^{(s)},B_g))$

\noindent$=\arg\max(\sum_{l=1}^L\sum_{b_{il}\neq 0}\sum_{k=1}^K
(\gamma_{ik}^{(s)}\ln P(v=i,y=k|X,B_g)))$

\noindent$=\arg\max E[L(N,Z^{(s)}|X,B_g)]$

\noindent$=X^{(s+1)}$.

Recall that, the $\Theta^{(s+1)}$, $\Delta^{(s+1)}$ and
$\Omega^{(s+1)}$ of $X^{(s+1)}$ can be computed in terms of
$\gamma_{lk}^{(s)}$ by Eq.4 in the article. So, we have:

\centerline{$G(X^{(s+1)},X^{(s)})\geq
G(X^{(s)},X^{(s)})=L(N|X^{(s)},B_g)$.}

Recall that $L(N|X,B_g) \geq G(X,X^{(s)})$, we have:

\centerline{$L(N|X^{(s+1)},B_g)\geq G(X^{(s+1)},X^{(s)}) \geq
G(X^{(s)},X^{(s)})=L(N|X^{(s)},B_g)$.}

That is to say, the $X^{(s+1)}$ obtained in the current iteration
will be not worse than $X^{(s)}$ obtained in last iteration. So, we
have the theorem.
\end{proof}

%\subsection{Prior approximation}

\begin{Prop}
In terms of the parameter of $X$, $\Theta$, $\Delta$, $Z$ and
$\Omega$, we have:
\begin{equation}
\Phi=\Theta B_gZ D^{-1}, \quad \Psi=\Delta B_gZ D^{-1}
\end{equation} \noindent where $D=diag(n\Omega)$.
\end{Prop}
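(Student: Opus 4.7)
The plan is to argue directly from the probabilistic definitions of the block couplings given in the Model section, rather than through an algebraic manipulation of the already-established identities $P = B_gZ\Theta$ and $Q = B_gZ\Delta$. By definition, $\phi_{pq}$ is the probability that a node in cluster $C_p$ couples with a node in cluster $C_q$; I would decompose this by conditioning on the identity of the chosen target node $v_j$ and then read off the resulting matrix identity.

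First I would apply the law of total probability in the form
\[
\phi_{pq} \;=\; \sum_{j=1}^n P(\text{source in } C_p \text{ couples with } v_j)\,\cdot\, P(v_j \text{ is the target chosen from } C_q).
\]
The first factor is exactly $\theta_{pj}$ by the definition of $\Theta$ together with the homogeneity assumption that every node in cluster $p$ shares the same outgoing coupling distribution. For the second factor, I would use that $(B_gZ)_{jq}$ equals $1$ iff $v_j$ lies in some block labeled with cluster $q$, i.e., iff $v_j \in C_q$; a uniform draw from $C_q$ then assigns probability $(B_gZ)_{jq}/|C_q|$ to each $v_j$. Substituting the expected cluster size $|C_q| = n\omega_q$ — which is precisely the $q$-th diagonal entry of $D$ — yields
\[
\phi_{pq} \;=\; \sum_{j=1}^n \theta_{pj}\,\frac{(B_gZ)_{jq}}{n\omega_q} \;=\; \frac{(\Theta B_gZ)_{pq}}{D_{qq}},
\]
which is exactly the $(p,q)$ entry of $\Theta B_gZ D^{-1}$. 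Ranging over all $(p,q)$ gives the first identity; the argument for $\Psi = \Delta B_gZ D^{-1}$ is strictly parallel, with $\delta_{pj}$ in place of $\theta_{pj}$ and ``expects to be coupled by'' in place of ``expects to couple with''.

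The main subtlety, and what I would treat carefully, is the normalization by $n\omega_q$. Since $\omega_q$ is the prior cluster probability inside the generative model rather than the empirical fraction in the observed partition, the identity is really an equality among model parameters; it is consistent with the corresponding derivation of $P = B_gZ\Theta$ in Proposition 1, which uses the same conditional-probability calculus. In this light the present statement is the coarse-granularity counterpart of that earlier identity: the factor $D^{-1}$ exactly compensates for aggregating the $n\omega_q$ expected equivalent target nodes of $C_q$ into a single block-level coupling. A clean way to present this is to verify entrywise first, as above, and then translate into the matrix form in the final line.
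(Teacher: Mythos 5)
Your proposal is correct and follows essentially the same route as the paper's own proof: the paper likewise writes $\phi_{pq}=\sum_{i\in C_q}\frac{1}{N_q}\theta_{pi}$ by conditioning on a target node drawn uniformly from cluster $q$, substitutes $N_q=n\omega_q$, and uses $(B_gZ)_{iq}$ as the membership indicator to extend the sum over all nodes before passing to matrix form. Your explicit remark that $n\omega_q$ is the expected rather than empirical cluster size is a fair point of care, but it does not change the argument.
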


\begin{proof}

We have

\centerline{$\phi_{pq}=\sum_{i\in C_q}\frac{1}{N_q}\theta_{pi}$}

\noindent where $i\in C_q$ denotes node $i$ is in the cluster $q$
with a size $N_q$, and $\frac{1}{N_q}$ is the probability of
selecting node $i$ from cluster $q$. Furthermore, we have:

\centerline{$\phi_{pq}=\frac{1}{n\omega_q}\sum_{i=1}^n\theta_{pi}(B_gZ)_{iq}$.}

Similarly, we have:

\centerline{$\psi_{pq}=\frac{1}{n\omega_q}\sum_{i=1}^n\delta_{pi}(B_gZ)_{iq}$.}

So, we have

\centerline{$\Phi=\Theta B_gZ D^{-1}, \quad \Psi=\Delta B_gZ
D^{-1}$.}

\end{proof}

%\subsection{Model selection}
\begin{Algo} Searching the optimal model $X$ given $N$ and
$B_g$

$X$=GBM($N$,$B_g$)

01.\quad $K=1$;

02.\quad $X^{(0)}=LM(N,K,B_g)$; $L^{(0)}=-\ln(N|X^{(0)},B_g);$

03.\quad for $K$=2:$n$

04.\quad \quad \quad $X^{(1)}=LM(N,K,B_g)$;

05.\quad \quad \quad $L^{(1)}=-\ln(N|X^{(1)},B_g)+2gK^2\ln K^2;$

06.\quad \quad \quad if $L^{(1)}<L^{(0)}$ then

07.\quad \quad \quad \quad \quad $X^{(0)}= X^{(1)}$; $L^{(0)}=
L^{(1)}$;

08.\quad\quad \quad endif

09.\quad\quad \quad if $L^{(1)}$ keeps increasing for predefined
steps then

10.\quad\quad \quad \quad \quad return $X^{(0)}$;

11.\quad\quad \quad endif

12.\quad  endfor

13.\quad return $X^{(0)}$;

\end{Algo}

\bigskip

\begin{Algo} Searching the optimal model $X$ given $N$, $K$
and $B_g$

$X$=LM($N$,$K$,$B_g$)

01.\quad randomly initialing $\Gamma^{(0)}=(\gamma_{lk})_{L\times
K}$ with $\sum_k \gamma_{lk}=1$;

02.\quad $s\leftarrow 1$;

03.\quad repeat until convergence:

04.\quad\quad\quad compute $\Theta^{(s)}$, $\Delta^{(s)}$ and
$\Omega^{(s)}$;

05.\quad\quad\quad compute $\Gamma^{(s)}$;

06.\quad\quad\quad $s\leftarrow s + 1$;

07.\quad compute $Z$ from $\Gamma^{(s)}$ according to Bayesian rule;

\end{Algo}

%\subsection{Hierarchies building}
\bigskip

\begin{Prop}
Let $B_{g_i}$ denotes the blocking model on the $i-th$ layer of the
hierarchical organization of network $N$, we have:

$P(X|N,B_{g_1},\cdots,B_{g_h})\propto P(N|X,B_{g_h})P(X)^{g_h}$
\end{Prop}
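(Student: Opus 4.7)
The plan is to reduce the posterior in question to Bayes' rule and then collapse the conditioning on all earlier blockings down to only the current layer, using a Markov-like structure inherent in the hierarchy-construction procedure. Starting from
\[
P(X|N,B_{g_1},\cdots,B_{g_h}) \propto P(N|X,B_{g_1},\cdots,B_{g_h}) \cdot P(X|B_{g_1},\cdots,B_{g_h}),
\]
I would argue that each factor on the right-hand side collapses so as to depend only on $B_{g_h}$, after which a prior identity already established in the main text finishes the proof.

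For the likelihood factor, I would appeal to Proposition~2 of this SI: the explicit form of $L(N|X,B_g)$ involves only the parameters of $X=(K,Z,\Theta,\Delta,\Omega)$ and a single blocking $B_g$, through the product $\Pi_{j=1}^n f(\theta_{kj},a_{ij}) f(\delta_{kj},a_{ji}) \omega_k$; no ancestor blocking enters the generative specification. Thus $P(N|X,B_{g_1},\cdots,B_{g_h}) = P(N|X,B_{g_h})$. For the prior factor, I would invoke the hierarchy-construction semantics described in the main text: $B_{g_h}$ is obtained by capsuling the clusters output at layer $h-1$, so along the chain $B_{g_1}\to B_{g_2}\to \cdots\to B_{g_h}$ every earlier blocking is a strict refinement of $B_{g_h}$ and contributes no information about the layer-$h$ model $X$ beyond what $B_{g_h}$ already encodes, giving $P(X|B_{g_1},\cdots,B_{g_h}) = P(X|B_{g_h})$. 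Substituting the main-text identity $P(X|B_g)=P(X)^g$ at $g=g_h$ then yields
\[
P(X|N,B_{g_1},\cdots,B_{g_h}) \propto P(N|X,B_{g_h}) \cdot P(X)^{g_h},
\]
as required.

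The main obstacle I anticipate is a precise justification of the Markov-like conditional independence for the prior, rather than the likelihood step which is essentially immediate from the generative model. Concretely, I would need to show that the randomness in $X$ at layer $h$ is screened off from the earlier blockings by $B_{g_h}$; this is plausible because the layer-$h$ model is optimized only against $B_{g_h}$, and the earlier blockings serve solely as the deterministic input from which $B_{g_h}$ was constructed, so they carry no residual signal about $X$ once $B_{g_h}$ is given. Once this screening-off is accepted, the rest of the derivation is Bayes' rule plus substitution of the prior identity, and no further calculation is required.
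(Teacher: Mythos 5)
Your proposal is correct and follows essentially the same route as the paper's own proof: Bayes' rule to split the posterior into likelihood and prior, collapse of the conditioning on $B_{g_1},\cdots,B_{g_{h-1}}$ via the nested-refinement property of the hierarchy (the paper phrases this as ``two nodes from the same block of $B_{g_{i-1}}$ will also be in the same block of $B_{g_i}$''), and substitution of the identity $P(X|B_g)=P(X)^{g}$. Your justification of the screening-off step is no less rigorous than the paper's one-line appeal to the same refinement property, so no gap remains relative to the published argument.
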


\begin{proof}

$P(X|N,B_{g_1},\cdots,B_{g_h})$

$=\frac{P(X,N,B_{g_1},\cdots,B_{g_h})}{P(N,B_{g_1},\cdots,B_{g_h})}$

$\propto P(X,N,B_{g_1},\cdots,B_{g_h})$

$=P(N|X,B_{g_1},\cdots,B_{g_h})P(X,B_{g_1},\cdots,B_{g_h})$

$=P(N|X,B_{g_1},\cdots,B_{g_h})P(X|B_{g_1},\cdots,B_{g_h})$

\quad $P(B_{g_h}|B_{g_1},\cdots,B_{g_{h-1}})\cdots
P(B_{g_2}|B_{g_1})P(B_{g_1})$

$\propto P(N|X,B_{g_1},\cdots,B_{g_h})P(X|B_{g_1},\cdots,B_{g_h})$

Since two nodes from the same block of $B_{g_{i-1}}$ will also be in
the same block of $B_{g_i}$, we have:

$P(X|N,B_{g_1},\cdots,B_{g_h})\propto
P(N|X,B_{g_h})P(X|B_{g_h})=P(N|X,B_{g_h})P(X)^{g_h}$

\end{proof}
\smallskip

\begin{Algo} Computing the threshold of a blocking model based on $\Phi$

01.\quad sort all entries of $\Phi$ into a non-increasing sequence
$S$;

02.\quad cluster all entries of $\Phi$ by the remarkable gaps of
$S$;

03.\quad return the biggest entry of the cluster with the minimum

\quad\quad mean as the threshold;
\end{Algo}

\smallskip
As examples, Fig. \ref{threshold} illustrates, by means of the above
algorithm, how to choose reasonable thresholds for reducing blocking
models of the world trade net and the co-purchased political book
network as discussed in the article.

\section{Additional Examples}

\subsection{Community detection from a benchmark network}

\noindent\smallskip We have applied our approach to the football
association network, a benchmark widely used for testing the
performance of community detection algorithms. Fig. \ref{fig:1}
gives the experimental results.

Our approach automatically find out 10 clusters or communities from
this network. Fig. \ref{fig:1}(a) shows the clustered network, and
Fig. \ref{fig:1}(c1) shows the clustered matrix, in which dots
denote the non-zero entries, and the rows and column corresponding
to the same cluster will be put together and be separated by the
solid lines. Fig. \ref{fig:1}b shows the searching process, in which
the cost in terms of $-L(N|X,B_1)+L_{\widehat{OC}}(X)$ firstly drops
down quickly, and then goes up sharply after a local minimum
corresponding to $K=10$. This iterative searching process shapes a
well-like landscape by penalizing both small likelihood and big
coding complexity or smaller priori, and a good compromise between
them is what we expect. Notice that, this estimated community number
is a little bit smaller than the real one, in total 12 real
associations, among which the teams from two small independent
associations prefer to play matches with outside teams. (c)-(e) show
the nested blocking models of this network. A hierarchical
organization with three layers have been constructed.

\subsection{Testing against synthetic networks}
We have tested our approach against several synthetic networks, and
here we give two examples. Fig. \ref{fig:2}a give one synthetic
network, in which a 2-community pattern and two bipartite patterns
coexist together. A two-layer nested organization are found out. In
the first layer, six clusters with homogeneous row as well as column
connection distributions are detected; in the second layer, such
detected clusters are grouped in pairs according to the node
couplings with coarser granularity. This time, a 2-community
pattern, a reciprocal bipartite pattern and an unilateral bipartite
pattern are emerged. Fig. \ref{fig:2}d give another synthetic
network, in which 12 communities are organized in a 3-layer
hierarchical structure according to the density of connections.
Correspondingly, a three-layer nested blocking model is constructed
as given in  Fig. \ref{fig:2}e.

\newpage
\noindent\textbf{Figure 1. An example of multiplex structures
consisting of hubs, outliers, and communities co-existing in one
social network.} The network as shown in (a) depicts the
co-appearances of 77 characters in the novel {\em Les Miserables}
(by Victor Hugo). Nodes denote characters and links connect any pair
of characters that appear in the same chapter of the novel. This
data set is from {\em The Stanford GraphBase: A Platform for
Combinatorial Computing}, edited by D.E.Knuth \cite{Knuth93}. The
physical connection profiles of nodes are represented in terms of
the adjacency matrix as shown in (b), in which dots denote ``1''
entries. In total, six blocks are detected by our method in terms of
the connection profiles of nodes, as separated by solid lines, so
that the nodes within the same blocks will demonstrate homogeneous
connectivities. To clearly visualize the block organization, the
matrix has been transformed by putting together the nodes within the
same blocks which are labeled by ``block 1'' to ``block 6'' from the
top down. Correspondingly, in the network as shown in (a), nodes are
also colored according to their block IDs (specifically, the
coloring schema is: block 1-red, block 2-green, block 3-blue, block
4-cyan, block 5-gray and block 6-yellow), and the sizes of nodes are
proportional to their respective degrees. (c) shows the blocking
model of the network, in which each block is colored, numbered and
placed according to the nodes it contains, and the sizes of blocks
are proportional to the number of nodes they contain, respectively.
The weights on the arrow lines globally measure the probabilities
that one node from one block will connect to another from other
blocks. (d) shows the reduced blocking model by cutting the arrow
lines with trivial weights. In this case, one hub pattern consisting
of block 1, one outlier pattern consisting of block 5 and four
community patterns consisting of blocks 2,3,4 and 6 will readily be
recognized by referring to the probabilistic linkage among these
blocks.

\noindent\textbf{Figure 2. The schematic illustrations of multiplex
structures.} (a)-(g) shows seven structural patterns frequently
observed in real-world networks, which are represented by networks,
matrices and blocking models, respectively. In the matrices, shades
represent the densities of links. In the blocking models, circles
denote blocks and solid arrow lines denote block
feedforward-couplings. From (a)-(g), structural patterns are
communities, authorities and outliers, hubs and outliers, a bow tie,
a multipartite, a bipartite and a hierarchical organization,
respectively. (g) shows a two-layer hierarchy; two overlapped
communities (also can be seen as a hub with two communities) and one
bipartite are in the first layer; two communities are in the second
layer.

\noindent\textbf{Figure 3. Detecting the nested core-periphery
organization and the cash flow patterns from a world trade net.} The
network as shown in (a) encodes the trade relationship among 80
countries. Nodes denote countries and arcs denote countries imported
commodities from others. The physical connection profiles of nodes
are represented in terms of its adjacency matrix as shown in (b), in
which dots denote ``1'' entries. In total, six blocks are detected
as separated by solid lines so that the nodes within the same blocks
will demonstrate homogeneous row- and column- connection
distributions. As before, the matrix has been transformed by putting
together the nodes within the same blocks which labeled by ``block
1'' to ``block 6'' from the top down. Correspondingly, in the
network shown in (a), nodes are colored according to their block IDs
(specifically, block 1-green, block 2-yellow, block 3-cyan, block
4-red, block 5-gray and block 6-blue), and the sizes of nodes are
proportional to their respective out-degrees. (c) shows the world
map in which 80 countries are colored by the same coloring schema
defined above. (d) shows the detected two-layer hierarchical
organization, in which each block is colored by the same coloring
schema defined above and the sizes of blocks are proportional to the
number of nodes they contain, respectively. In the first layer, a
reduced blocking model is given by cutting the trivial block
couplings. The arrow lines with weights denote the cash flows from
the countries within one block to others. The cash flows from two
hubs (blocks 4 and 1) are highlighted by thicker arrow lines, of
which the thickness is proportional to their strength measured by
respective block couplings. These highlighted cash flows would
outline the backbone of the cash circulation in the world trade
system. A macroscopic hub-outlier pattern on the second layer is
detected; together with the microscopic hub-outlier patterns on the
first layer, the whole trade system would demonstrate a nested
core-periphery organization. (e) shows the multiplex structures
discovered from the reduced blocking model on the first layer by a
procedure of matching isomorphism subgraphs as defined before. These
structural patterns are stored in their respective reservoirs
labeled as authority, community, hub or outlier.

\noindent\textbf{Figure 4. Mining granular association rules from a
co-occurrence net.} The network in (a) encodes the co-purchased
relationship among 105 books, in which the nodes with large, median
and small sizes are labeled by ``liberal'', ``conservative'' and
``neutral(including one unlabeled)'', respectively. A two-layer
hierarchical organization is detected. In the first layer, seven
blocks are detected. As before, the matrix as shown in (b) has been
transformed by putting together the nodes within the same blocks
which labeled by ``block 1'' to ``block 7'' from the top down.
Correspondingly, in the network shown in (a), nodes are colored
according to their block IDs (specifically, block 1-blue, block
2-red, block 3-gray, block 4-brown, block 5-green, block 6-cyan and
block 7-yellow). In the blocking model as shown in (b), each block
is colored by the same coloring schema defined above, the sizes of
blocks are proportional to the number of nodes they contain,
respectively; the arrow lines to be reserved in its reduced blocking
model are highlighted by thicker arrow lines. The blocking model and
its corresponding matrix of the second layer are shows in (c), in
which a global two-community structure is detected. The granularity
of the network corresponding to different layers is given by $g$.

\newpage

\begin{figure}[b]
 \centering
\centerline{\hbox{\psfig{figure=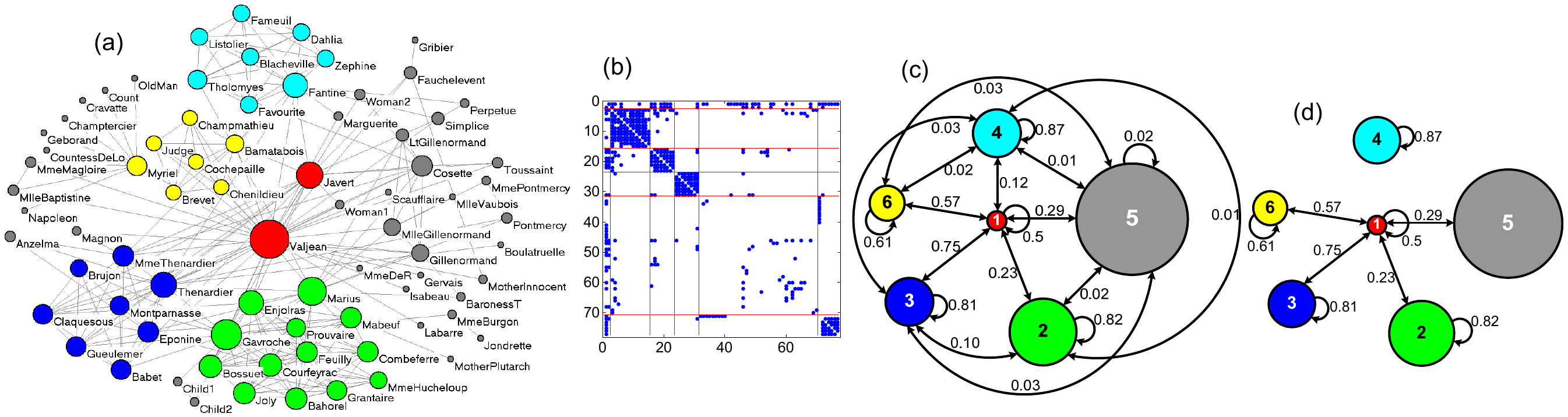,scale=0.7}}}
\caption{An example of the multiplex structures in a social
network.} \label{lesmis}
\end{figure}

\newpage

\begin{figure}[b]
  \centering
\centerline{\hbox{\psfig{}}}
  \caption{The schematic illustrations of multiplex
structures.} \label{blocking}
\end{figure}

\newpage

\begin{figure}[b]
  \centering
\centerline{\hbox{\psfig{figure=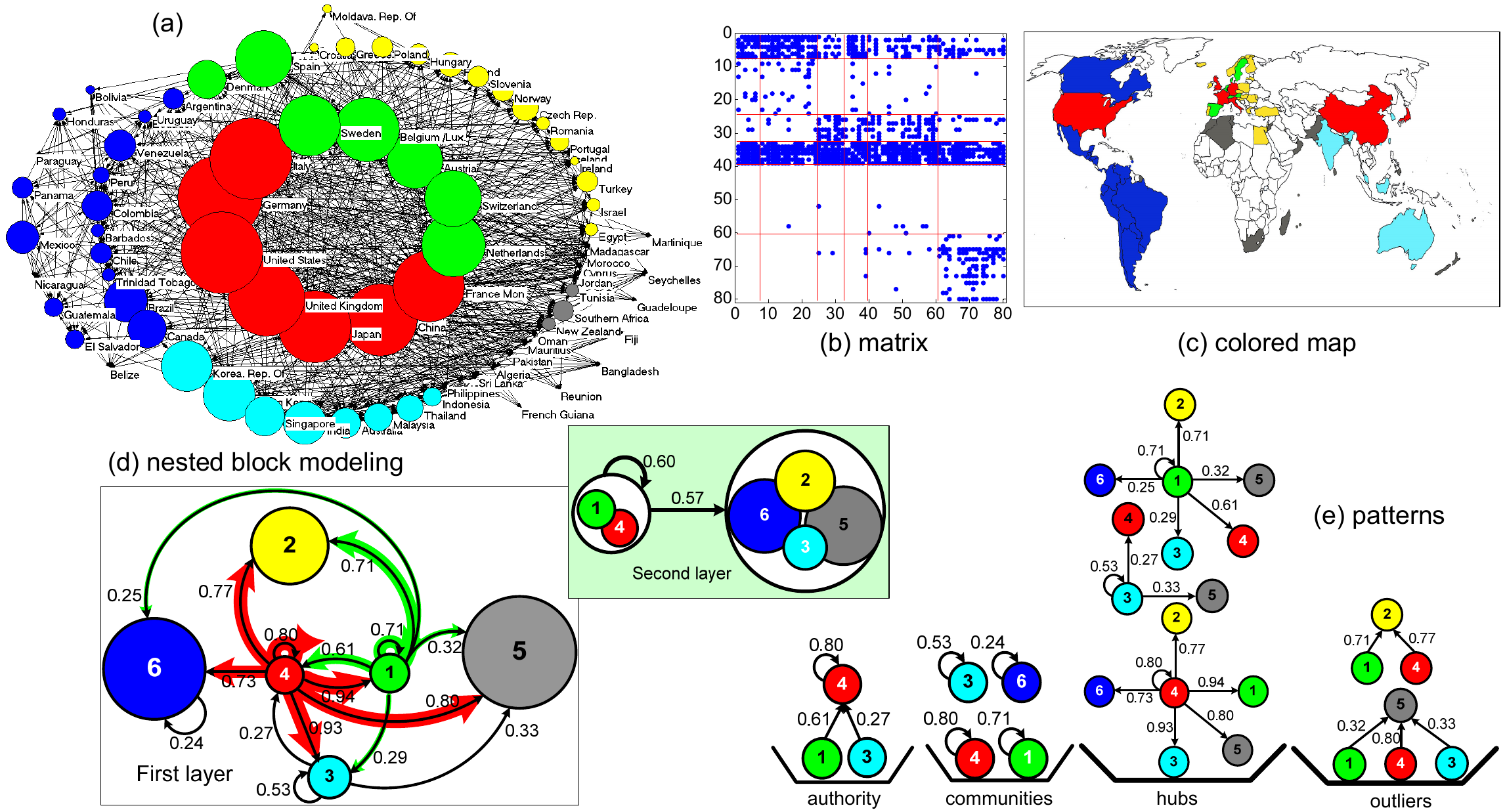,scale=0.7}}}
  \caption{Detecting the nested core-periphery organization and the
cash flow patterns from a world trade net.} \label{worldtrade}
\end{figure}

\newpage

\begin{figure}[b]
  \centering
\centerline{\hbox{\psfig{}}}
  \caption{Mining granular association rules from a
co-occurrence net.}\label{polbooks}
\end{figure}

\begin{figure}[b]
  \centering
\centerline{\hbox{\psfig{figure=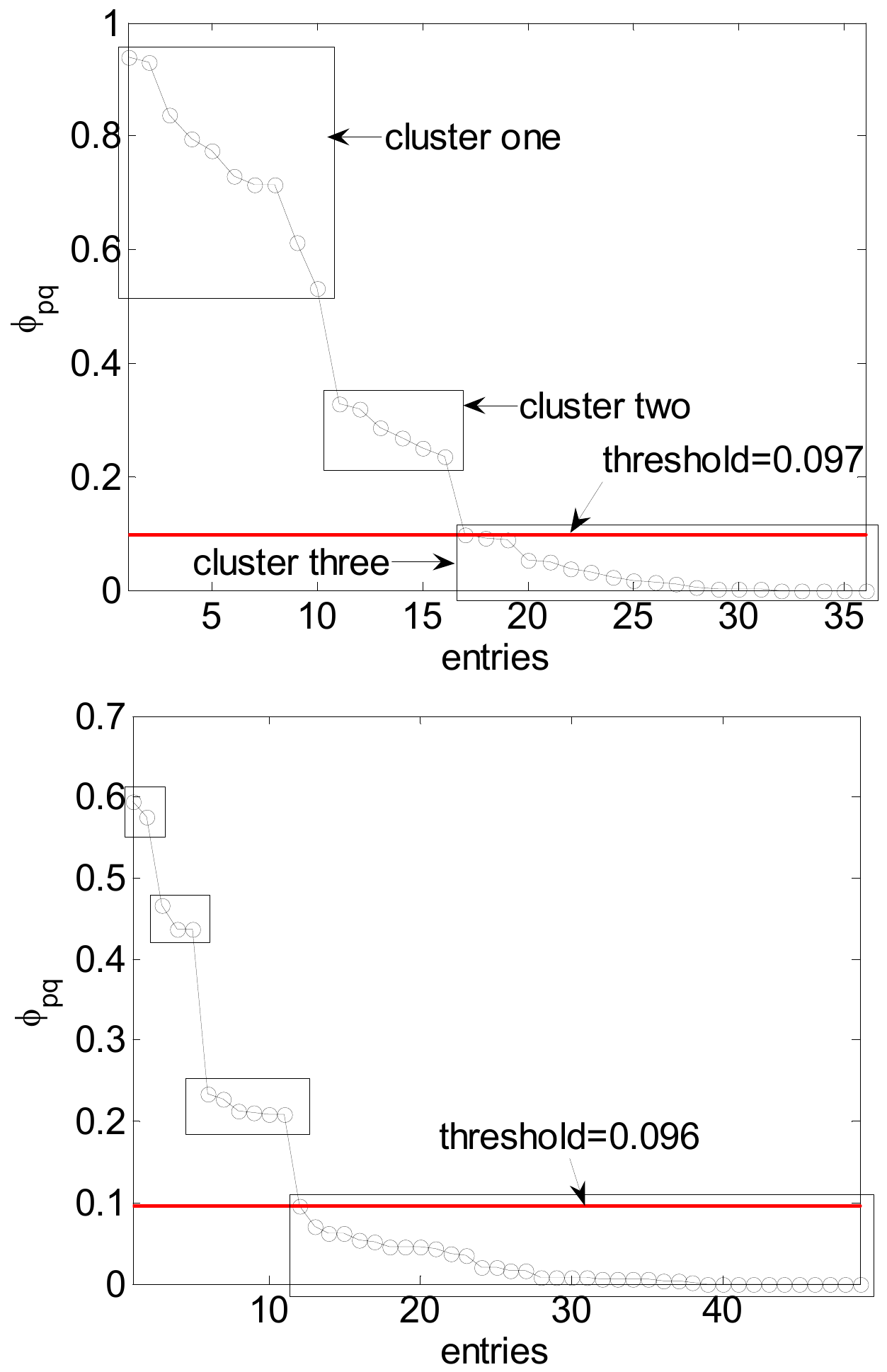,scale=0.65}}}
  \caption{The illustrations of calculating thresholds for reducing blocking models.
  The top: calculating the threshold for the blocking model of the first layer of the world trade
  net. The bottom: calculating the threshold for the blocking model of the first layer of the
  co-purchased political book network. In both figures, the $x$ denotes the sorted entries and the $y$ denotes the values of block couplings in terms of $\Phi$.
  These couplings are clustered into three or four groups, as separated by rectangles, by remarkable
  gaps. The calculated thresholds are the maximum entries of the clusters with minimum means pointed out by red solid lines, respectively. \label{threshold}}
\end{figure}

\begin{figure}[b]
  \centering
\centerline{\hbox{\psfig{figure=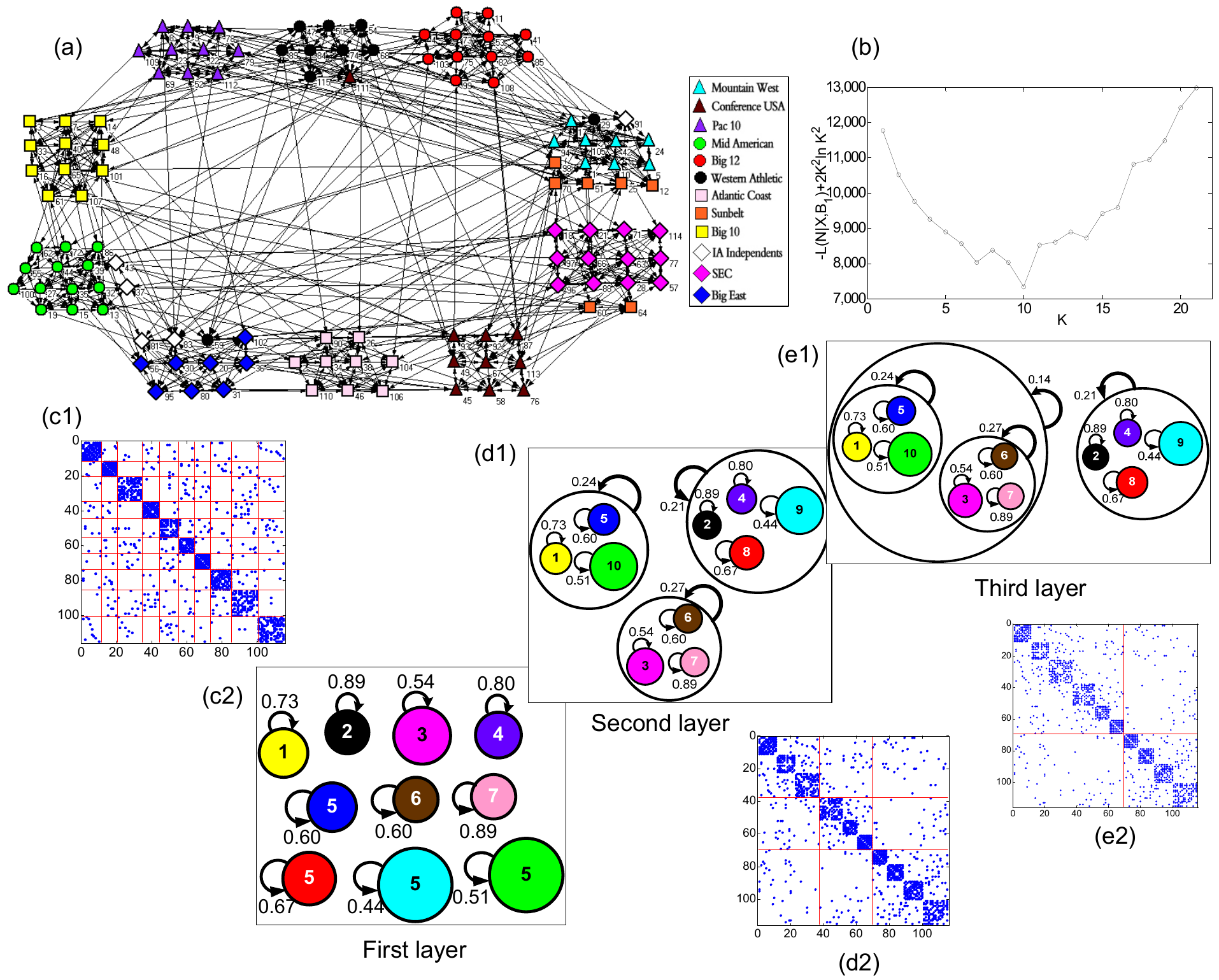,scale=0.75}}}
  \caption{Detecting communities from the football association network. (a) The clustered football association network;
  (b) The iterative searching landscape. (c1-c2) The matrix and blocking model of the first layer.
  (d1-d2) The matrix and blocking model of the second layer.
  (e1-e2) The matrix and blocking model of the third layer.
  The coloring schemes used in (a) and (c2), (d1) and (e1) are same.\label{fig:1}}
\end{figure}

\begin{figure}[b]
  \centering
\centerline{\hbox{\psfig{figure=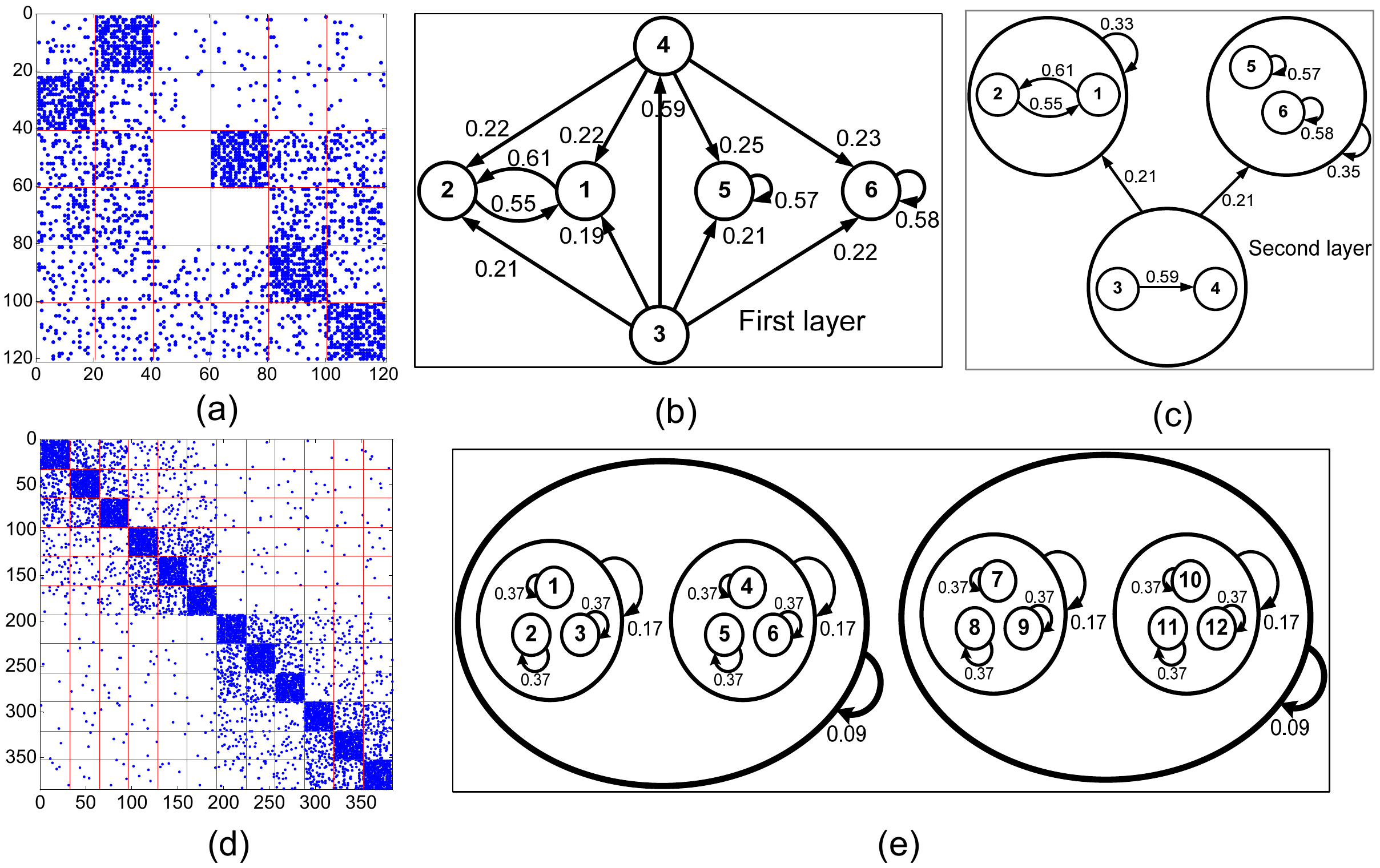,scale=0.55}}}
  \caption{Testing synthetic networks. (a) The matrix of a synthetic network, in which
  dots denote the non-zero entries, and the six detected clusters  are separated by solid lines;
  (b) The blocking model in the first layer, in which blocks are numbered according to the same sequence of the clusters in the matrix.
  (c) The blocking model in the second layer. (d) Another synthetic network, in which twelve detected clusters are separated
  by the solid lines. (e) The obtained nested blocking model corresponding to the hierarchical organization with three layers,
  in which blocks are numbered according to the same sequence of the clusters in the matrix. \label{fig:2}}
\end{figure}

\end{document}